\newcommand{\str}{\ensuremath{T} }
\newcommand{\slp}{\ensuremath{\mathcal{S}} }
\newcommand{\kq}{\ensuremath{k_{\str, q}} }
\newcommand{\qg}{\ensuremath{G_q(\str)} }
\tikzstyle{every picture}+=[remember picture]
\newtheorem{lemma}{Lemma}
\newtheorem{theorem}{Theorem}
\newcommand{\qed}{\hfill\ensuremath{\Box}\medskip\\\noindent}
\newenvironment{proof}{\noindent\emph{Proof. }}
\title{Compact q-gram Profiling of Compressed Strings}
\author{Philip Bille \\ \texttt{phbi@dtu.dk} \and Patrick Hagge Cording \\ \texttt{phaco@dtu.dk} \and Inge Li G{\o}rtz \\ \texttt{inge@dtu.dk}}
\begin{document}

\maketitle              

\begin{abstract}
\noindent We consider the problem of computing the q-gram profile of a string \str of size $N$ compressed by a context-free grammar with $n$ production rules. We present an algorithm that runs in $O(N-\alpha)$ expected time and uses $O(n+q+\kq)$ space, where $N-\alpha\leq qn$ is the exact number of characters decompressed by the algorithm and $\kq\leq N-\alpha$ is the number of distinct q-grams in $\str$. This simultaneously matches the current best known time bound and improves the best known space bound. Our space bound is asymptotically optimal in the sense that any algorithm storing the grammar and the q-gram profile must use $\Omega(n+q+\kq)$ space. To achieve this we introduce the q-gram graph that space-efficiently captures the structure of a string with respect to its q-grams, and show how to construct it from a grammar.
\end{abstract}

\section{Introduction} 
\label{sec:introduction}

Given a string $\str$, the q-gram profile of \str is a data structure that can answer substring frequency queries for substrings of length $q$ (q-grams) in $O(q)$ time. We study the problem of computing the q-gram profile from a string \str of size $N$ compressed by a context-free grammar with $n$ production rules. We assume that the model of computation is the standard $w$-bit word RAM where each word is capable of storing a character of $T$, i.e., the characters of $T$ are drawn from an alphabet $\{1,\ldots , 2^w\}$, and hence $w \geq \log N$~\cite{Hagerup1998}. The space complexities are measured by the number of words used.

The generalization of string algorithms to grammar-based compressed text is currently an active area of research. Grammar-based compression is studied because it offers a simple and strict setting and is capable of modelling many commonly used compression schemes, such as those in the Lempel-Ziv family~\cite{lz77,lz78}, with little expansion~\cite{charikar,rytter}. The problem of computing the q-gram profile has its applications in bioinformatics, data mining, and machine learning~\cite{gartner,leslie,paass}. All are fields where handling large amount of data effectively is crucial. 
Also, the q-gram distance can be computed from the q-gram profiles of two strings and used for filtering in string matching~\cite{burkhardt1999q,jokinen1991,sutinen1995using,sutinen1996filtration,takaoka1994approximate,ukkonen}.

Recently the first dedicated solution to computing the q-gram profile from a grammar-based compressed string was proposed by Goto~et~al.~\cite{goto11}. Their algorithm runs in $O(qn)$ expected time\footnote{The bound in \cite{goto11} is stated as worst-case since they assume alphabets of size $O(N^c)$ for fast suffix sorting, where $c$ is a constant. We make no such assumptions and without it hashing can be used to obtain the same bound in expectation.} and uses $O(qn)$ space. This was later improved by the same authors~\cite{goto12} to an algorithm that takes $O(N-\alpha)$ expected time and uses $O(N-\alpha)$ space, where $N$ is the size of the uncompressed string, and $\alpha$ is a parameter depending on how well \str is compressed with respect to its q-grams. $N-\alpha\leq \min(qn, N)$ is in fact the exact number of characters decompressed by the algorithm in order to compute the q-gram profile, meaning that the latter algorithm excels in avoiding decompressing the same character more than once. 

We present a Las Vegas-type randomized algorithm that gives Theorem 1.

\begin{theorem}\label{thm:theorem1}
	Let $\str$ be a string of size $N$ compressed by a grammar of size $n$. The q-gram profile can be computed in $O(N-\alpha)$ expected time and $O(n+q+\kq)$ space, where $\kq \leq N-\alpha$ is the number of distinct q-grams in $\str$.
\end{theorem}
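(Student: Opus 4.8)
The plan is to compute the q-gram profile by a two-phase strategy that respects both the time bound $O(N-\alpha)$ and the space bound $O(n+q+\kq)$. First I would preprocess the grammar so that each nonterminal $A$ knows the length of the string it derives; this takes $O(n)$ time and space by a single bottom-up pass. The central idea is that every q-gram in $\str$ either lies entirely within the expansion of a single child of some production $A \to BC$, or straddles the boundary between $B$ and $C$. The q-grams of the first type are accounted for recursively, so the only new q-grams introduced at a rule $A \to BC$ are those crossing the $B$/$C$ boundary, namely the substrings of length $q$ that begin in the last $q-1$ characters of $B$'s expansion and end in the first $q-1$ characters of $C$'s expansion. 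Thus for each rule it suffices to decompress a window consisting of the suffix of length $q-1$ of $B$ and the prefix of length $q-1$ of $C$ — a string of length at most $2q-2$ — and extract the at most $q-1$ crossing q-grams from it.

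The key obstacle, and the reason the naive approach costs $O(qn)$ time rather than $O(N-\alpha)$, is that decompressing these prefix/suffix windows independently for every rule re-reads the same characters many times. To hit $O(N-\alpha)$ I would arrange the decompression so that each character is extracted at most a constant number of times across all rules; this is exactly the quantity $N-\alpha$ that the previous work isolates, and I would reuse that accounting by decompressing prefixes and suffixes in a shared, memoized fashion so that overlapping windows are not recomputed. Concretely, I would compute for each nonterminal its length-$(q-1)$ prefix and suffix lazily, storing only what is needed and charging each decompressed character to the $N-\alpha$ budget, so that the total number of character accesses is $O(N-\alpha)$. The main technical difficulty here is bounding the reuse carefully: one must show that with appropriate sharing the total work over all $n$ boundaries sums to $O(N-\alpha)$ rather than $O(qn)$, and that the auxiliary bookkeeping does not inflate the space beyond $O(n+q)$.

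For the space bound, the crucial new ingredient is to avoid storing a q-gram profile of size proportional to $N-\alpha$, which is what the previous linear-space algorithm does. Instead I would introduce the q-gram graph promised in the abstract: a structure that captures the distinct q-grams and their adjacency relationships using only $O(n+q+\kq)$ space. As each crossing q-gram is generated from a decompression window, I would insert it into a hash table keyed by a fingerprint of the q-gram — using Karp–Rabin fingerprints so that equality of q-grams can be tested in $O(1)$ time after $O(q)$ setup — and maintain a running frequency count weighted by how many times the enclosing boundary is used in the full derivation. Here the randomization enters: the algorithm is Las Vegas because collisions in the fingerprinting are detected and resolved, giving an expected-time guarantee while the output is always correct. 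The frequency of a crossing q-gram at rule $A \to BC$ must be multiplied by the number of occurrences of $A$ in the derivation tree, so I would precompute these occurrence counts in $O(n)$ time by a top-down pass over the DAG of the grammar.

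The step I expect to be the main obstacle is simultaneously achieving the tight time and space bounds: the time bound forces the decompression to be shared across rules so that the character budget is exactly $N-\alpha$, while the space bound forbids materializing all $N-\alpha$ decompressed characters or a profile of that size at once. Reconciling these — streaming the crossing q-grams through a bounded-size structure while never decompressing a character twice beyond the allowed budget, and proving that the q-gram graph together with the hash table stays within $O(n+q+\kq)$ words — is where the real work lies; the frequency-counting and length-precomputation steps are routine by comparison.
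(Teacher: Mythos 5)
Your high-level skeleton matches the paper's: crossing q-grams at each boundary of a rule $X_i = X_lX_r$ (the paper's ``relevant substrings''), frequency counts weighted by the number of occurrences of $X_i$ in the derivation tree, Karp--Rabin fingerprints, and a compact graph of distinct q-grams. But the proposal defers, rather than resolves, the three steps that actually carry the result, so there are genuine gaps.

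First, the $O(N-\alpha)$ time bound. You propose to ``compute for each nonterminal its length-$(q-1)$ prefix and suffix lazily, storing only what is needed'' in a memoized fashion. If you actually store these windows you use $O(qn)$ space, violating the bound you are trying to prove; if you do not store them, you have not said how a later rule avoids re-decompressing them. The paper's mechanism is concrete and different: it uses the G\c{a}sieniec et al.\ structure for $O(j)$-time prefix/suffix decompression, and then processes the SLP in a depth-first left-to-right order maintaining a linked list of pending rules, so that when rule $X_i=X_lX_r$ is reached all q-grams of $t_{X_l}$ are already in the graph and only $|r_{X_i}|-(q-1)$ \emph{new} characters are decompressed per rule; already-visited rules are not re-expanded but instead contribute a stored pointer into the graph (plus a length-$(q-1)$ prefix). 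Summing $|r_{X_i}|-(q-1)$ over rules is exactly what yields $N-\alpha$.

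Second, the profile itself. A hash table keyed by fingerprints of q-grams cannot answer a query correctly without storing the q-grams (costing $O(q\cdot\kq)$ space) or risking a false positive on a collision. The paper instead keys the graph on $(q-1)$-gram fingerprints with single characters on edges ($O(\kq)$ space), then converts the graph to a CS-tree and builds the suffix tree of that CS-tree (Shibuya), giving $O(q)$-time queries in $O(q+\kq)$ space. Nothing in your proposal explains how queries are answered within the claimed space.

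Third, the Las Vegas claim. You assert that ``collisions in the fingerprinting are detected and resolved'' but give no detection procedure; naively comparing colliding q-grams requires the decompressed strings you cannot afford to keep. The paper's verification is a specific two-pass check on the suffix tree: for every pair of leaves storing equal fingerprints, their nearest common ancestor must have string depth at least $q-1$. Without some such mechanism the algorithm remains Monte Carlo and the theorem as stated is not established.
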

Hence, our algorithm simultaneously matches the current best known time bound and improves the best known space bound. Our space bound is asymptotically optimal in the sense that any algorithm storing the grammar and the q-gram profile must use $\Omega(n+q+	\kq)$ space.


A straightforward approach to computing the q-gram profile is to first decompress the string and then use an algorithm for computing the profile from a string. For instance, we could construct a compact trie of the q-grams using an algorithm similar to a suffix tree construction algorithm as mentioned in~\cite{karkkainen}, or use Rabin-Karp fingerprints to obtain a randomized algorithm~\cite{ukkonen}. However, both approaches are impractical because the time and space usage associated with a complete decompression of \str is linear in its size $N=O(2^n)$. To achieve our bounds we introduce the q-gram graph, a data structure that space efficiently captures the structure of a string in terms of its q-grams, and show how to compute the graph from a grammar. We then transform the graph to a suffix tree containing the q-grams of $\str$. Because our algorithm uses randomization to construct the q-gram graph, the answer to a query may be incorrect. However, as a final step of our algorithm, we show how to use the suffix tree to verify that the fingerprint function is collision free and thereby obtain Theorem~\ref{thm:theorem1}.

\section{Preliminaries and Notation}
\subsection{Strings and Suffix Trees}
Let \str be a string of length $|T|$ consisting of characters from the alphabet $\Sigma$. We use $\str[i:j]$, $0\leq i \leq j < |\str|$, to denote the substring starting in position $i$ of \str and ending in position $j$ of $\str$. We define $socc(s, \str)$ to be the number of occurrences of the string $s$ in $\str$.

The suffix tree of \str is a compact trie containing all suffixes of $\str$. That is, it is a trie containing the strings $\str[i:|T|-1]$ for $i=0..|T|-1$. The suffix tree of $\str$ can be constructed in $O(|T|)$ time and uses $O(|T|)$ space~\cite{farach}. The generalized suffix tree is the suffix tree for a set of strings. It can be constructed using time and space linear in the sum of the lengths of the strings in the set. The set of strings may be compactly represented as a common suffix tree (CS-tree). The CS-tree has the characters of the strings on its edges, and the strings start in the leaves and end in the root. If two strings have some suffix in common, the suffixes are merged to one path. In other words, the CS-tree is a trie of the reversed strings, and is not to be confused with the suffix tree. For CS-trees, the following is known.

\begin{lemma}[Shibuya \cite{shibuya}]\label{lem:suffixtreeoftree}
Given a set of strings represented by a CS-tree of size $n$ and comprised of characters from an alphabet of size $O(n^c)$, where $c$ is a constant, the generalized suffix tree of the set of strings can be constructed in $O(n)$ time using $O(n)$ space.

\end{lemma}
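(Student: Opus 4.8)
The plan is to adapt Farach's linear-time suffix tree construction to the setting of a CS-tree, exploiting the polynomial alphabet so that all sorting can be done by radix sort in linear time. Observe first that, since the CS-tree is the trie of the reversed strings, every node $v$ corresponds to exactly one suffix of one of the input strings, namely the string spelled out by the path from $v$ upward to the root; the generalized suffix tree we seek is precisely the compacted trie of these $O(n)$ upward paths. The overall structure is a recursion that roughly halves the problem size at each level together with a linear-time merge, yielding the recurrence $T(n) = T(n/2) + O(n)$ and hence $O(n)$ total time and space.

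First I would split the suffixes into two groups according to the parity of the depth of their originating node. To build the suffix tree of one parity class, I would contract each node with its parent edge-character, pairing adjacent characters along every root path; this produces a CS-tree with half as many nodes whose edge labels are pairs over the original alphabet. Since the original alphabet has size $O(n^c)$, the pairs range over a universe of size $O(n^{2c})$ and can be replaced by integer ranks in $\{1,\ldots,O(n)\}$ by a single radix sort in $O(n)$ time. Recursing on this smaller CS-tree yields the suffix tree of the suffixes in one parity class, from which the suffix tree of the other class is derived in linear time: each such suffix is a single leading character followed by a suffix already placed in the recursive tree, so its position is recovered by sorting on the pair consisting of the leading character and the rank of the following suffix.

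The crux, and the step I expect to be the main obstacle, is merging the two suffix trees into the generalized suffix tree of all suffixes. Following Farach's technique, I would perform a coupled top-down traversal of the two trees, merging edges character by character and splitting them where the two trees first disagree; the difficulty is that naively comparing edge labels is too expensive, so the mismatch positions must instead be resolved using longest-common-prefix information computed alongside the recursion and answered by constant-time lowest-common-ancestor queries on the recursive tree. Generalizing this merge from a single path (the string case) to the branching structure of a tree is the delicate part: one must argue that the coupled traversal visits each edge only a constant number of times and that every required LCP between suffixes of opposite parity classes is available in $O(1)$ time, so that the entire merge runs in $O(n)$.

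Finally I would verify that each level of the recursion uses $O(n)$ time and $O(n)$ space, dominated by the radix sort, the derivation step, and the merge, and that the subproblem sizes decrease geometrically, so that the recurrence solves to $O(n)$ time and $O(n)$ space, establishing the lemma.
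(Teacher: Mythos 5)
The paper does not prove this lemma at all: it is imported verbatim from Shibuya's paper on constructing the suffix tree of a (CS-)tree, and is used as a black box in the step ``Computing the suffix tree of the CS-tree.'' So there is no in-paper proof to compare against; what you have written is a reconstruction of the proof in the cited reference.

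Your strategy is the right one and is essentially what Shibuya does: identify the suffixes with upward paths of the CS-tree, run Farach-style odd/even recursion with radix sort over the polynomial alphabet, derive the even tree from the odd tree, and merge. But as written this is a plan rather than a proof, and the parts you defer are exactly where the content lies. First, the merge: Farach's coupled traversal does not resolve mismatches on the fly using LCP queries --- it merges \emph{optimistically} by first character and edge length, producing an over-merged tree, and only afterwards computes the true longest common prefixes (via a constant-time LCA structure on the merged tree and a careful analysis of the ``merged'' nodes) to unmerge the incorrectly identified portions. Extending this over-merge/unmerge correction from a single path of suffixes to the branching structure of a CS-tree, while keeping every step linear, is the main technical contribution of Shibuya's paper, and your sketch acknowledges the difficulty without closing it. Second, the recursion size: on a string, odd and even positions each account for half the suffixes, but on a tree the nodes of even depth need not be at most half of all nodes (consider a star, where all but one node has odd depth), so the claim that contraction ``produces a CS-tree with half as many nodes'' and the recurrence $T(n)=T(n/2)+O(n)$ both require an argument specific to the tree setting. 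Third, the derivation of the even tree from the odd tree is subtler on a tree because distinct even-depth suffixes can share the same odd-depth ``following suffix'' through different leading characters at branching nodes. None of these gaps affects the paper, which only needs the statement as a cited primitive, but they would need to be filled for your text to stand as a proof.
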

For a node $v$ in a suffix tree, the string depth $sd(v)$ is the sum of the lengths of the labels on the edges from the root to $v$. We use $parent(v)$ to get the parent of $v$, and $nca(v,u)$ is the nearest common ancestor of the nodes $v$ and $u$.

\subsection{Straight Line Programs} 
\label{sub:straight_line_programs}

A Straight Line Program (SLP) is a context-free grammar in Chomsky normal form that derives a single string \str  of length $N$ over the alphabet $\Sigma$. In other words, an SLP \slp is a set of $n$ production rules of the form $X_i=X_lX_r$ or $X_i=a$, where $a$ is a character from the alphabet $\Sigma$, and each rule is reachable from the start symbol $X_n$. Our algorithm assumes without loss of generality that the compressed string given as input is compressed by an SLP.

It is convenient to view an SLP as a directed acyclic graph (DAG) in which each node represents a production rule. Consequently, nodes in the DAG have exactly two outgoing edges. An example of an SLP is seen in Figure \ref{example}(a). When a string is decompressed we get a derivation tree which corresponds to the depth-first traversal of the DAG.

We denote by $t_{X_i}$ the string derived from production rule $X_i$, so $\str=t_{X_n}$. For convenience we say that $|X_i|$ is the length of the string derived from $X_i$, and these values can be computed in linear time in a bottom-up fashion using the following recursion. For each $X_i=X_lX_r$ in $\slp$,

$$
|X_i|=
\begin{cases} |X_l|+|X_r| & \text{if $X_i$ is a nonterminal,}\\
1 &\text{otherwise.}
\end{cases}
$$
Finally, we denote by $occ(X_i)$ the number of times the production rule $X_i$ occurs in the derivation tree. We can compute the occurrences using the following linear time and space algorithm due to Goto~et~al.~\cite{goto11}. Set $occ(X_n)=1$ and $occ(X_i)=0$ for $i=1.. n-1$. For each production rule of the form $X_i=X_lX_r$, in decreasing order of $i$, we set $occ(X_l)=occ(X_l)+occ(X_i)$ and similarly for $occ(X_r)$. 

\subsection{Fingerprints}
A Rabin-Karp fingerprint function $\phi$ takes a string as input and produces a value small enough to let us determine with high probability whether two strings match in constant time. Let $s$ be a substring of $\str$, $c$ be some constant, $2N^{c+4}<p \leq 4N^{c+4}$ be a prime, and choose $b\in \mathbb{Z}_p$ uniformly at random. Then,

$$
\phi(s)=\sum_{k=1}^{|s|} s[k]\cdot b^k \mod p .
$$

\begin{lemma}[Rabin and Karp~\cite{rabinkarp}]\label{lem:rabkarp}
Let $\phi$ be defined as above. Then, for all $0\leq i,j \leq |T|-q$,
$$
\phi(T[i:i+q])=\phi(T[j:j+q]) \text{ \, iff \, } 
T[i:i+q]=T[j:j+q] \text{ \, w.h.p.} 
$$
\end{lemma}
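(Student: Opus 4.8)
The plan is to split the stated biconditional into its two directions and to observe that only one of them involves the random choice of $b$. The forward implication is deterministic: $\phi$ is a well-defined function, so $T[i:i+q]=T[j:j+q]$ forces $\phi(T[i:i+q])=\phi(T[j:j+q])$ with certainty, independent of $b$ and $p$. All of the probabilistic content therefore lies in the converse, that two \emph{distinct} $q$-grams almost never receive the same fingerprint, and this is where the ``w.h.p.'' qualifier is needed.

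To control collisions I would first bound the failure probability for a single pair. Fix two $q$-grams $s_1=T[i:i+q]$ and $s_2=T[j:j+q]$ with $s_1\neq s_2$ and view the difference of their fingerprints as a function of the random base $b$:
$$
\phi(s_1)-\phi(s_2)=\sum_{k=1}^{q}\bigl(s_1[k]-s_2[k]\bigr)\,b^k \bmod p .
$$
Since $p$ is prime, $\mathbb{Z}_p$ is a field, and the right-hand side is a univariate polynomial in $b$ of degree at most $q$. Because $s_1$ and $s_2$ differ in at least one position, at least one coefficient is nonzero modulo $p$, so the polynomial is not identically zero; a nonzero polynomial of degree at most $q$ over a field has at most $q$ roots. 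Hence a uniformly chosen $b\in\mathbb{Z}_p$ is a root — i.e.\ produces a collision — with probability at most $q/p$.

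The remaining step converts this per-pair guarantee into a statement holding simultaneously for all pairs by a union bound. There are fewer than $(|T|-q+1)^2=O(N^2)$ ordered pairs of $q$-grams, so the probability that \emph{some} distinct pair collides is at most $O(N^2)\cdot q/p$. Substituting $q\leq N$ and $p>2N^{c+4}$ bounds this by $O(N^{3}/N^{c+4})=O(N^{-(c+1)})$, which is polynomially small and can be driven down by increasing the constant $c$. Conditioning on the complementary event that no collision occurs, $\phi$ is injective on the $q$-grams of \str, so equality of fingerprints is equivalent to equality of the underlying substrings — precisely the claimed biconditional — and this holds with high probability over the choice of $b$.

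I expect the one genuinely delicate point, beyond the otherwise classical roots-plus-union-bound argument, to be verifying that the difference polynomial is nonzero \emph{over} $\mathbb{Z}_p$ and not merely over the integers: a position with $s_1[k]\neq s_2[k]$ contributes a vanishing coefficient if those two character values happen to be congruent modulo $p$. This is ruled out because \str contains at most $n\leq N$ distinct characters, which may be taken to lie in $\{1,\ldots,N\}$, while $p=\Omega(N^{c+4})$ exceeds this range; distinct characters thus stay distinct modulo $p$, some coefficient survives, and the degree-$q$ root bound legitimately applies.
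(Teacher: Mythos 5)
The paper does not prove this lemma at all---it is imported as a known result with a citation to Rabin and Karp---so there is no in-paper argument to compare against. Your proof is the standard and correct one: the forward direction is trivial, the converse reduces to bounding the roots of the nonzero difference polynomial $\sum_{k=1}^{q}(s_1[k]-s_2[k])b^k$ over $\mathbb{Z}_p$ (at most $q$ roots, hence collision probability at most $q/p$ per pair), and a union bound over the $O(N^2)$ pairs gives failure probability $O(N^{-(c+1)})$ given $p>2N^{c+4}$. Your closing caveat about coefficients surviving reduction modulo $p$ is a genuine subtlety the paper glosses over (its alphabet is $\{1,\ldots,2^w\}$, which need not fit below $p$), and your fix---renaming the at most $n\leq N$ distinct characters actually occurring in $T$ into a range below $p$---is a legitimate way to dispose of it.
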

We denote the case when $T[i:i+q]\neq T[j:j+q]$ and $\phi(T[i:i+q])=\phi(T[j:j+q])$ for some $i$ and $j$ a collision, and say that $\phi$ is collision free on substrings of length $q$ in \str if $\phi(T[i:i+q])=\phi(T[j:j+q])$ iff $T[i:i+q]= T[j:j+q]$ for all $i$ and $j$, $0\leq i,j < |T|-q$.

Besides Lemma~\ref{lem:rabkarp}, fingerprints exhibit the useful property that once we have computed $\phi(T[i:i+q])$ we can compute the fingerprint $\phi(T[i+1:i+q+1])$ in constant time using the update function,
$$
\phi(T[i+1:i+q+1])=\phi(T[i:i+q])/b-T[i] + T[i+q+1]\cdot b^q \mod p.
$$

\section{Key Concepts} 
\label{sec:relevant_concepts}

\subsection{Relevant Substrings} 
\label{sub:relevant_substrings}
Consider a production rule $X_i=X_lX_r$ that derives the string $t_{X_i}=t_{X_l}t_{X_r}$. Assume that we have counted the number of occurrences of q-grams in $t_{X_l}$ and $t_{X_r}$ separately. Then the relevant substring $r_{X_i}$ is the smallest substring of $t_{X_i}$ that is necessary and sufficient to process in order to detect and count q-grams that have not already been counted. In other words, $r_{X_i}$ is the substring that contains q-grams that start in $t_{X_l}$ and end in $t_{X_r}$ as shown in Figure~\ref{slp}. Formally, for a production rule $X_i=X_lX_r$, the relevant substring is $r_{X_i}=t_{X_i}[\max(0, |X_l|-q+1):\min(|X_l|+q-2, |X_i|-1)]$. We want the relevant substrings to contain at least one q-gram, so we say that a production rule $X_i$ only has a relevant substring if $|X_i|\geq q$. The size of a relevant substring is $q \leq |r_{X_i}|\leq 2(q-1)$. 

\begin{figure}[h!]
\begin{center}

\advance\leftskip-3cm
\advance\rightskip-3cm

\begin{tikzpicture}[x=1,y=1,-,>=stealth',auto, thick,
every label/.style={inner sep=3pt},
	rule/.style={fill=white, circle, inner sep=1pt, minimum size=0pt},
  	terminal/.style={circle,fill=white, inner sep=0.2pt},
  	string/.style={|-|, draw},
  	dist/.style={<->, thin, draw},
  	subtree/.style={thin,draw}]

\node[rule] (i) at (35,25) {$X_i$};
\node[rule] (l) at (0,0) {$X_l$};
\node[rule] (r) at (70,0) {$X_r$};

\path[draw] (i) -- (l);
\path[draw] (i) -- (r);

\path[subtree] ($(l)+(0,-6)$) -- ($(l)+(-40,-45)$) -- ($(l)+(25,-45)$) -- ($(l)+(0,-6)$);
\path[subtree] ($(r)+(0,-6)$) -- ($(l)+(25,-45)$) -- ($(r)+(35,-45)$) -- ($(r)+(0,-6)$);

\path[string] ($(l)+(-40,-52)$) -- ($(r)+(35,-52)$) node[midway, below] {$t_{X_i}$};
\path[string] ($(l)+(-40,-70)$) -- ($(l)+(25,-70)$) node[midway,below] {$t_{X_l}$};
\path[string, -|] ($(l)+(25,-70)$) -- ($(r)+(35,-70)$) node[midway, below] {$t_{X_r}$};
\path[string] ($(l)+(6,-84)$) -- ($(l)+(44,-84)$) node[midway, below] {$r_{X_i}$};
\path[dist, dotted] ($(l)+(6,-98)$) -- ($(l)+(44,-98)$) node[midway, below] {\small$2(q-1)$};


\end{tikzpicture}
\caption{The derivation tree for $X_i=X_lX_r$ and the relevant susbtring $r_{X_i}$ of $X_i$.}\label{slp}
\end{center}
\end{figure}
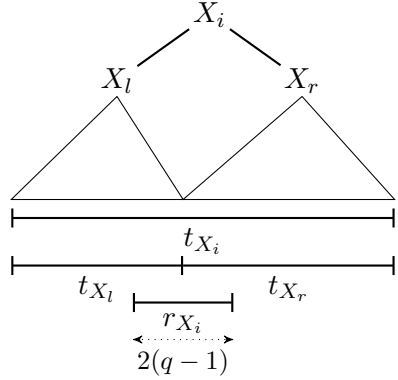
The concept of relevant substrings is the backbone of our algorithm because of the following. If $X_i$ occurs $occ(X_i)$ times in the derivation tree for $\slp$, then the substring $t_{X_i}$ occurs at least $occ(X_i)$ times in $\str$.  It follows that if a q-gram $s$ occurs $socc(s, t_{X_i})$ times in some substring $t_{X_i}$ then we know that it occurs at least $socc(s, t_{X_i})\cdot occ(X_i)$ times in $\str$. Using our description of relevant substrings we can rewrite the latter statement to $socc(s, t_{X_i})\cdot occ(X_i)=socc(s, t_{X_l})\cdot occ(X_l)+socc(s, t_{X_r})\cdot occ(X_r)+socc(s, r_{X_i})\cdot occ(X_i)$ for the production rule $X_i=X_lX_r$. By applying this recursively to the root $X_n$ of the SLP we get the following lemma. 

\begin{lemma}[Goto~et~al.~\cite{goto12}ß]\label{lem:sufficiency2}
Let $\slp_q=\{X_i \mid X_i\in \slp \text{ and } |X_i|\geq q \}$ be the set of production rules that have a relevant substring, and let $s$ be some q-gram. Then, 
$$
socc(s, T)=\sum_{X_i\in \slp_q} socc(s, r_{X_i})\cdot occ(X_i).
$$
\end{lemma}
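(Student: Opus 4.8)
The plan is to prove Lemma~\ref{lem:sufficiency2} by induction on the structure of the SLP, counting q-gram occurrences as they are ``discovered'' at each internal node of the DAG. The key idea, already sketched in the paragraph preceding the lemma, is that every occurrence of a q-gram $s$ in $T$ must appear within the string $t_{X_i}$ derived by some production rule, and each such occurrence can be attributed to a unique rule: namely the deepest rule whose relevant substring contains that occurrence. I would make this attribution precise and then sum over all rules.

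First I would establish the local identity for a single rule. Fix a nonterminal $X_i = X_l X_r$ with $|X_i| \geq q$. Every occurrence of $s$ inside $t_{X_i} = t_{X_l} t_{X_r}$ falls into exactly one of three categories: it lies entirely within $t_{X_l}$, entirely within $t_{X_r}$, or it straddles the boundary (starting in $t_{X_l}$ and ending in $t_{X_r}$). The straddling occurrences are exactly those counted by $socc(s, r_{X_i})$, since $r_{X_i} = t_{X_i}[\max(0,|X_l|-q+1):\min(|X_l|+q-2,|X_i|-1)]$ is designed to be the minimal window capturing precisely the q-grams crossing the split point. This yields the per-rule decomposition
\begin{equation}
socc(s, t_{X_i}) = socc(s, t_{X_l}) + socc(s, t_{X_r}) + socc(s, r_{X_i}). \label{eq:local}
\end{equation}
Multiplying through by $occ(X_i)$ and using the relation $socc(s, t_{X_i}) \cdot occ(X_i) = socc(s, t_{X_l}) \cdot occ(X_l) + socc(s, t_{X_r}) \cdot occ(X_r) + socc(s, r_{X_i}) \cdot occ(X_i)$ (which follows from \eqref{eq:local} together with the additive contribution of $occ(X_i)$ to $occ(X_l)$ and $occ(X_r)$ in the derivation tree) gives the recurrence that drives the telescoping.

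Next I would unfold this recurrence from the root. Starting at $X_n$, where $occ(X_n)=1$ and $socc(s, t_{X_n}) = socc(s, T)$, I repeatedly apply the per-rule identity to the $socc(s, t_{X_l}) \cdot occ(X_l)$ and $socc(s, t_{X_r}) \cdot occ(X_r)$ terms. Each expansion replaces a weighted $t$-count by the weighted $t$-counts of its children plus a weighted relevant-substring term. The telescoping leaves only the relevant-substring terms $socc(s, r_{X_i}) \cdot occ(X_i)$ accumulated over all rules reached, while the $t$-count terms descend until they hit rules with $|X_i| < q$, where $socc(s, t_{X_i}) = 0$ because a string shorter than $q$ contains no q-gram. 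Hence only rules in $\slp_q$ contribute, and summing their relevant-substring terms yields the claimed formula.

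The main obstacle, and the place demanding the most care, is the correctness of the boundary handling in \eqref{eq:local}: I must verify that the relevant substring $r_{X_i}$ counts each straddling occurrence exactly once and counts no occurrence lying wholly inside $t_{X_l}$ or $t_{X_r}$. Concretely, a q-gram occurrence in $t_{X_i}$ starting at position $p$ straddles the boundary iff $p \leq |X_l|-1$ and $p+q-1 \geq |X_l|$, i.e. $|X_l|-q+1 \leq p \leq |X_l|-1$; I would check that these are precisely the start positions captured by the window $r_{X_i}$, after accounting for the truncation $\max(0, \cdot)$ and $\min(\cdot, |X_i|-1)$ at the ends of short rules. A secondary subtlety is ensuring that summing the local identity over all rules does not double-count or miss occurrences when the same nonterminal $X_i$ is shared by multiple parents in the DAG; this is exactly what the weight $occ(X_i)$ corrects for, and I would confirm that the bookkeeping via $occ$ is consistent with the derivation-tree interpretation of each rule.
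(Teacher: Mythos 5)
Your proposal is correct and follows essentially the same route as the paper, which obtains the lemma from the same local decomposition $socc(s, t_{X_i}) = socc(s, t_{X_l}) + socc(s, t_{X_r}) + socc(s, r_{X_i})$ telescoped from the root with $occ(\cdot)$ weights (the paper only sketches this in the paragraph preceding the statement and attributes the lemma to Goto~et~al.). The one point to handle carefully --- which you already flag in your closing paragraph --- is that the weighted identity with $occ(X_l)$ and $occ(X_r)$ on the right-hand side is not literally true for an individual rule when a child variable is shared by several parents in the DAG; it is only valid once the coefficients accumulated during the top-down unfolding are grouped per rule, where they sum to $occ(X_i)$ by the defining recursion for $occ$.
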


\subsection{Prefix and Suffix Decompression} 
\label{sub:prefix_and_suffix_decompression}

The following Lemma states a result that is crucial to the algorithm presented in this paper.

\begin{lemma}[G\c{a}sieniec et al. \cite{gasieniec}]\label{lem:lineardecomp}
An SLP \slp of size $n$ can be preprocessed in $O(n)$ time using $O(n)$ extra space such that, given a pointer to a variable $X_i$ in $\slp$, the prefix and suffix of $t_{X_i}$ of length $j$ can be decompressed in $O(j)$ time.

\end{lemma}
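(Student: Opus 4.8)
The plan is to prove the prefix case and obtain the suffix case by symmetry, i.e.\ by reversing the roles of left and right children (equivalently, by applying the prefix construction to the grammar in which every rule $X_i=X_lX_r$ is replaced by $X_i=X_rX_l$). The starting observation is that the length-$j$ prefix of $t_{X_i}$ is exactly the sequence of the $j$ leftmost leaves of the derivation tree rooted at $X_i$, read left to right. Starting from $X_i$ and repeatedly following the left child yields a path $X_i=v_0,v_1,\ldots,v_m$ ending in a terminal $v_m$, and writing $u_k$ for the right child of $v_{k-1}$ we get $t_{X_i}=t_{v_m}t_{u_m}t_{u_{m-1}}\cdots t_{u_1}$. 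Hence the prefix is produced by emitting the single character $t_{v_m}$ and then fully decompressing the hang-off subtrees $t_{u_m},t_{u_{m-1}},\ldots$ in this order until $j$ characters have been output.

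The easy part of the analysis is the cost inside a completely decompressed subtree: a subtree with $s$ leaves is a full binary tree with $2s-1$ nodes, so a plain recursive traversal costs $O(s)$, and since every character it produces lies in the requested prefix, the total time spent in fully decompressed subtrees is $O(j)$. To avoid overshooting on the final hang-off, when a subtree $t_{u_k}$ would exceed the remaining budget I recurse into it with the prefix procedure instead of decompressing it in full; each recursive call emits at least its own leftmost character, so the recursion depth, and hence the number of calls, is bounded by $O(j)$.

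The main obstacle is the descent along the spine: naively walking from $v_0=X_i$ down to $v_m$ to discover the hang-offs costs $O(m)$, and $m$ can be as large as $\Theta(n)$, so for small $j$ this term dominates. The key idea is to enumerate the spine \emph{from the bottom upward}, so that every spine node touched immediately yields output. To support this I precompute in $O(n)$ time and space the lengths $|X_i|$ (already available), a pointer $lm(X_i)$ to the leftmost terminal $v_m$ of each variable together with its spine length $m$, and the forest $F$ in which the parent of a variable is its left child; the roots of $F$ are the terminals, and the left spine of $X_i$ is precisely the $F$-path from the root $v_m$ up to the descendant $X_i$. Equipping $F$ with a level-ancestor data structure, again $O(n)$ preprocessing, lets me retrieve the spine node $v_{m-t}$ at any prescribed distance $t$ from the bottom in $O(1)$ time as a level-ancestor query anchored at $X_i$.

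With these structures the procedure is: jump to $v_m$ via $lm(X_i)$ and emit its character; then for $t=1,2,\ldots$ obtain $v_{m-t}$ by a level-ancestor query, take its right child, and decompress that hang-off (fully, or recursively when it would overshoot) until the budget is exhausted. Every level-ancestor query is followed by the output of at least one fresh character, so at most $j$ queries are issued; combined with the $O(j)$ bound on subtree decompression and the $O(n)$ preprocessing, this yields the claimed bounds. I expect the delicate point to be exactly this spine navigation: arguing that the bottom-up enumeration charges all spine work to emitted characters rather than to the spine length $m$, and correctly routing the final, possibly oversized, hang-off through the recursive call.
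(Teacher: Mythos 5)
Your proposal is correct, but it does substantially more than the paper, and in a different way. The paper treats the prefix case as a black box: it simply cites G\c{a}sieniec et al.\ for linear-time prefix decompression and devotes its entire argument to the suffix extension, namely reversing the SLP by swapping the two variables on the right-hand side of every rule (so that $t_{X_{i'}}=t^R_{X_i}$, citing Matsubara et al.\ for this fact), storing cross-pointers between the two SLPs, and answering a suffix query by a prefix query in the reversed SLP followed by reversing the output. Your symmetry argument for suffixes is exactly this reversal trick, so on that half you and the paper coincide (do note explicitly that the emitted string must be reversed at the end, which costs only $O(j)$). Where you genuinely diverge is that you give a self-contained proof of the prefix case: the left-spine decomposition $t_{X_i}=t_{v_m}t_{u_m}\cdots t_{u_1}$, the observation that fully expanded hang-off subtrees cost $O(1)$ per emitted character, the bottom-up enumeration of the spine via a level-ancestor structure on the left-child forest $F$ (which is indeed a forest, since the SLP is acyclic and every nonterminal has a unique left child), and the charging of both level-ancestor queries and recursive calls to freshly emitted characters. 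This argument is sound and yields the claimed $O(n)$ preprocessing and $O(j)$ query bound; what it buys you is independence from the cited reference, at the cost of reproving a known result and of giving an amortized $O(j)$ bound rather than the stronger constant-delay-per-character guarantee of the original construction (which the lemma does not require anyway).
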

G\c{a}sieniec et al. give a data structure that supports linear time decompression of prefixes, but it is easy to extend the result to also hold for suffixes. Let $s$ be some string and $s^R$ the reversed string. If we reverse the prefix of length $j$ of $s^R$ this corresponds to the suffix of length $j$ of $s$. To obtain an SLP for the reversed string we swap the two variables on the right-hand side of each nonterminal production rule. The reversed SLP $\slp '$ contains $n$ production rules and the transformation ensures that $t_{X_{i'}}=t^R_{X_i}$ for each production rule $X_{i'}$ in $\slp '$. A proof of this can be found in~\cite{matsubara2009efficient}. Producing the reversed SLP takes linear time and in the process we create pointers from each variable to its corresponding variable in the reversed SLP. After both SLP's are preprocessed for linear time prefix decompression, a query for the length-$j$ suffix of $t_{X_i}$ is handled by following the pointer from $X_i$ to its counterpart in the reversed SLP, decompressing the prefix of length $j$ of this, and reversing the prefix. 







\subsection{The q-gram Graph} 
\label{sub:the_q_gram_graph}

We now describe a data structure that we call the q-gram graph. It too will play an important role in our algorithm. The q-gram graph \qg captures the structure of a string \str in terms of its q-grams. In fact, it is a subgraph of the De Bruijn graph over $\Sigma^q$ with a few augmentations to give it some useful properties. We will show that its size is linear in the number of distinct q-grams in $\str$, and we give a randomized algorithm to construct the graph in linear time in $N$. 


A node in the graph represents a distinct $(q-1)$-gram, and the label on the node is the fingerprint of the respective $(q-1)$-gram. The graph has a special node that represents the first $(q-1)$-gram of \str and which we will denote the start node. Let $x$ and $y$ be characters and $\alpha$ a string such that $|\alpha|=q-2$. There is an edge between two nodes with labels $\phi(x\alpha)$ and $\phi(\alpha y)$ if $x \alpha y$ is a substring of $\str$. The graph may contain self-loops. Each edge has a label and a counter. The label of the edge $\{\phi(x\alpha ),\phi( \alpha y)\}$ is $y$, and its counter indicates the number of times the substring $x\alpha y$ occurs in $\str$. Since $|x\alpha y|=q$ this data structure contains information about the frequencies of q-grams in $\str$.


\begin{lemma}\label{lem:qgramsize}
The q-gram graph of $\str$, $G_q(\str)$, has $O(k_{\str,q})$ nodes and $O(k_{\str,q})$ edges. 
\end{lemma}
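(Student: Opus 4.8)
The plan is to count edges and nodes separately, bounding each by $\kq$ up to a constant factor, and throughout I will assume $N\geq q$ (otherwise there are no q-grams and the statement is vacuous).

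First I would dispose of the edges. By definition every edge of $\qg$ arises from a length-$q$ substring $x\alpha y$ of $\str$, and the endpoints together with the label, namely the triple $\bigl(\phi(x\alpha),\,\phi(\alpha y),\,y\bigr)$, are completely determined by that q-gram. Conversely, distinct q-grams yield distinct triples whenever $\phi$ is collision free, and any fingerprint collision can only merge edges rather than create new ones. Hence the number of edges is at most the number of distinct q-grams of $\str$, which is exactly $\kq$. This settles the $O(\kq)$ edge bound and is the easy half.

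The more interesting half is bounding the nodes, i.e.\ the distinct $(q-1)$-grams, by $\kq$. My plan is to exploit that $\qg$ is essentially a de Bruijn-style graph in which $\str$ traces a single walk. Writing $w_i=\str[i:i+q-2]$ for the $(q-1)$-gram starting at position $i$, consecutive $(q-1)$-grams $w_i$ and $w_{i+1}$ overlap in $q-2$ characters and together spell the q-gram $\str[i:i+q-1]$, so the edge for that q-gram joins the nodes of $w_i$ and $w_{i+1}$. Reading $\str$ from left to right therefore produces a walk $w_0,w_1,\ldots,w_{N-q+1}$ that visits exactly the nodes of $\qg$ and traverses exactly its edges. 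Letting $(V',E')$ be the subgraph of visited nodes and traversed edges, this subgraph is connected because it is traced by one walk, so $|E'|\geq |V'|-1$; that is, the number of distinct $(q-1)$-grams is at most one more than the number of distinct q-grams. Thus $\qg$ has at most $\kq+1$ nodes, and since $N\geq q$ forces $\kq\geq 1$ this is $O(\kq)$.

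Combining the two bounds gives the claim. The step I expect to require the most care is the node bound: I must argue cleanly that the walk induced by $\str$ realizes every node and every edge of $\qg$ and forms a connected subgraph, so that the spanning-tree inequality $|E'|\geq|V'|-1$ genuinely applies, and I must remark that fingerprint collisions (which identify nodes or edges) and self-loops (which add an edge without adding a node) can only decrease these counts and therefore cannot violate either bound.
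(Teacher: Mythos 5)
Your proposal is correct and follows essentially the same route as the paper: bound the edges by $\kq$ via the correspondence between edges and distinct q-grams, then use connectivity of the graph (traced by the left-to-right walk of $\str$ through consecutive $(q-1)$-grams) to conclude there are at most $\kq+1$ nodes. Your version is somewhat more careful about fingerprint collisions and self-loops, but the underlying argument is the one the paper gives.
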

\begin{proof}
Each node represents a distinct $(q-1)$-gram and its outgoing edges have unique labels. The combination of a node and an outgoing edge thus represents a distinct q-gram, and therefore there can be at most \kq edges in the graph. For every node with label $\phi(\str[i:i+q-1])$, $i=1..|\str|-q-1$, the graph contains a node with label $\phi(\str[i+1:i+q])$ with an edge between the two. The graph is therefore connected and has at most has at most $\kq +1$ nodes.~\qed

\end{proof}
The graph can be constructed using the following online algorithm which takes a string $\str$, an integer $q\geq 2$, and a fingerprint function $\phi$ as input. Let the start node of the graph have the fingerprint $\phi(\str[0:(q-1)-1])$. Assume that we have built the graph $G_q(\str[0:k+(q-1)-1])$ and that we keep its nodes and edges in two dictionaries implemented using hashing. We then compute the fingerprint $\phi(\str[k+1:k+(q-1)])$ for the $(q-1)$-gram starting in position $k+1$ in $\str$. Recall that since this is the next successive q-gram, this computation takes constant time. If a node with label $\phi(\str[k+1:k+(q-1)])$ already exists we check if there is an edge from $\phi(\str[k:k+(q-1)-1])$ to $\phi(\str[k+1:k+(q-1)])$. If such an edge exists we increment its counter by one. If it does not exist we create it and set its counter to $1$. If a node with label $\phi(\str[k+1:k+(q-1)])$ does not exist we create it along with an edge from $\phi(\str[k:k+(q-1)-1])$ to it.


\begin{lemma}\label{lem:graphcons}
For a string $\str$ of length $N$, the algorithm is a Monte Carlo-type randomized algorithm that builds the q-gram graph $G_q(\str)$ in $O(N)$ expected time.
\end{lemma}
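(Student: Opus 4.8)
The plan is to establish two things: that the algorithm correctly constructs $G_q(\str)$ with high probability (justifying the Monte Carlo designation), and that its expected running time is $O(N)$. Since the construction is online and processes the $(q-1)$-grams of $\str$ from left to right, the natural approach is to argue by induction on the prefix length that after processing position $k$ the algorithm has built exactly $G_q(\str[0:k+(q-1)-1])$, assuming the fingerprint function $\phi$ is collision free on the relevant substrings. The base case is the start node, and the inductive step is precisely the case analysis already spelled out in the algorithm description (matching node found / edge incremented, edge created, or new node created). The only way correctness fails is if $\phi$ produces a collision — two distinct $(q-1)$-grams mapped to the same fingerprint — which would merge two nodes that should be distinct; by Lemma~\ref{lem:rabkarp} this happens only with low probability, and I would bound the total collision probability by a union bound over the $O(N^2)$ pairs of $(q-1)$-grams, using the fact that $p$ is chosen large enough ($p > 2N^{c+4}$) to drive this below $N^{-c}$ for a suitable constant. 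This is what makes the algorithm Monte Carlo: it is always fast but may occasionally be wrong.

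For the time bound, I would account for the cost of processing each of the $N-q+2$ positions. At each step the work consists of (i) computing the next fingerprint, which is $O(1)$ by the update function from Lemma~\ref{lem:rabkarp}'s surrounding discussion since consecutive $(q-1)$-grams overlap, and (ii) a constant number of dictionary operations — lookups on the node dictionary and the edge dictionary, plus possible insertions. The dictionaries are implemented with hashing, so each operation takes $O(1)$ expected time. Multiplying the $O(1)$ expected per-position cost by the $O(N)$ positions gives $O(N)$ expected total time. I would note that the space and the number of stored objects are governed by Lemma~\ref{lem:qgramsize}, so the dictionaries never grow beyond $O(\kq)$ entries, but the running-time argument only needs the per-operation expected constant cost.

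The initialization deserves a brief separate remark: computing the very first fingerprint $\phi(\str[0:(q-1)-1])$ from scratch costs $O(q)$ time, which is subsumed by $O(N)$ since $q \leq N$ whenever $\str$ has at least one q-gram (and the graph is empty otherwise). The main subtlety, and the step I expect to require the most care, is cleanly separating the two sources of randomness in the claim. The running time is $O(N)$ \emph{in expectation} purely because of the randomized hashing in the dictionaries, and this expectation bound holds \emph{unconditionally}, regardless of whether $\phi$ collides. The correctness, by contrast, is the \emph{high-probability} statement, and it depends on the random choice of $b$ in the fingerprint function, which is independent of the hash functions backing the dictionaries. Conflating these — for instance, trying to condition the time analysis on the collision-free event, or letting a collision inflate the node count and hence the dictionary size — would muddy the proof. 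Keeping the hashing randomness (governing time) and the fingerprint randomness (governing correctness) logically distinct is the cleanest way to simultaneously justify both the ``Monte Carlo'' label and the ``$O(N)$ expected time'' bound.
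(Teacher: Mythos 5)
The paper states this lemma without proof, so there is nothing to compare against; your argument is correct and is precisely the intended one: $O(1)$ amortized fingerprint computation via the rolling update, $O(1)$ expected time per dictionary operation from hashing, $O(q)$ initialization subsumed by $O(N)$, and the Monte Carlo designation arising from possible fingerprint collisions bounded via Lemma~\ref{lem:rabkarp} and a union bound. Your closing point about keeping the hashing randomness (time) and the fingerprint randomness (correctness) separate is exactly right and is what the paper implicitly relies on when it later verifies $\phi$ as a separate Las Vegas step. One tiny quibble in your aside: a fingerprint collision \emph{merges} nodes that should be distinct, so it can only shrink the node dictionary, never inflate it; this does not affect your argument since, as you note, the time bound is unconditional anyway.
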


\section{Algorithm}
Our main algorithm is comprised of four steps: preparing the SLP, constructing the q-gram graph from the SLP, turning it into a CS-tree, and computing the suffix tree of the CS-tree. Ultimately the algorithm produces a suffix tree containing the reversed q-grams of $\str$, so to answer a query for a q-gram $s$ we will have to lookup $s^R$ in the suffix tree. Below we will describe the algorithm and we will show that it runs in $O(qn)$ expected time while using $O(n+q+\kq)$ space; an improvement over the best known algorithm in terms of space usage. The catch is that a frequency query to the resulting data structure may yield incorrect results due to randomization. However, we show how to turn the algorithm from a Monte Carlo to a Las Vegas-type randomized algorithm with constant overhead. Finally, we show that by decompressing substrings of $\str$ in a specific order, we can construct the q-gram graph by decompressing exactly the same number of characters as decompressed by the best known algorithm.

The algorithm is as follows. Figure~\ref{example} shows an example of the data structures after each step of the algorithm.

\paragraph{Preprocessing.}
As the first step of our algorithm we preprocess the SLP such that we know the size of the string derived from a production rule, $|X_i|$, and the number of occurrences in the derivation tree, $occ(X_i)$. We also prepare the SLP for linear time prefix and suffix decompressions using Lemma~\ref{lem:lineardecomp}.


\paragraph{Computing the q-gram graph.}
In this step we construct the q-gram graph $G_q(\str)$ from the SLP $\slp$. Initially we choose a suitable fingerprint function for the q-gram graph construction algorithm and proceed as follows. For each production rule $X_i=X_lX_r$ in $\slp$, such that $|X_i|\geq q$, we decompress its relevant substring $r_{X_i}$. Recall from the definition of relevant substrings that $r_{X_i}$ is the concatenation of the $q-1$ length suffix of $t_{X_l}$ and the $q-1$ length prefix of $t_{X_r}$. If $|X_l|\leq q-1$ we decompress the entire string $t_{X_l}$, and similarly for $t_{X_r}$. Given $r_{X_i}$ we compute the fingerprint of the first $(q-1)$-gram, $\phi(r_{X_i}[0:(q-1)-1])$, and find the node in $G_q(\str)$ with this fingerprint as its label. The node is created if it does not exist. Now the construction of $G_q(\str)$ can continue from this node, albeit with the following change to the construction algorithm. When incrementing the counter of an edge we increment it by $occ(X_i)$ instead of $1$.

The q-gram graph now contains the information needed for the q-gram profile; namely the frequencies of the q-grams in $\str$. The purpose of the next two steps is to restructure the graph to a data structure that supports frequency queries in $O(q)$ time.


\paragraph{Transforming the q-gram graph to a CS-tree.}
The CS-tree that we want to create is basically the depth-first tree of \qg with the extension that all edges in \qg are also in the tree. We create it as follows. Let the start node of $\qg$ be the node whose label match the fingerprint of the first $q-1$ characters of $\str$. Do a depth-first traversal of \qg starting from the start node. For a previously unvisited node, create a node in the CS-tree with an incoming edge from its predecessor. When reaching a previously visited node, create a new leaf in the CS-tree with an incoming edge from its predecessor. Labels on nodes and edges are copied from their corresponding labels in $\qg$. We now create a path of length $q-1$ with the first $q-1$ characters of \str as labels on its edges. We set the last node on this path to be the root of the depth-first tree. The first node on the path is the root of the final CS-tree. 


\paragraph{Computing the suffix tree of the CS-tree.} 
Recall that a suffix in the CS-tree starts in a node and ends in the root of the tree. Usually we store a pointer from a leaf in the suffix tree to the node in the CS-tree from which the particular suffix starts. However, when we construct the suffix tree, we store the value of the counter of the first edge in the suffix as well as the label of the first node on the path of the suffix.\\	


\begin{figure}[h!]
\begin{center}

\subfloat[SLP.]{
\begin{tikzpicture}[->,>=stealth',auto, 
  thick,main node/.style={circle,fill=white, inner sep=0.2pt}]

  \node[main node] (5) [] {$X_7$};
  \node[main node] (4) [below right=0.5cm and 0.4cm of 5] {$X_6$};
  \node[main node] (3) [below left=0.5cm and 0.4cm of 5] {$X_5$};
  \node[main node] (1) [below=0.5cm of 3] {$X_3$};
  \node[main node] (2) [below=0.5cm of 4] {$X_4$};
  \node[main node] (a) [below=0.5cm of 1] {$X_1=a$};
  \node[main node] (b) [below=0.5cm of 2] {$X_2=b$};

  \path[] (5) edge [] (4);
  \path[] (5) edge [] (3);
  \path[] (3) edge [bend left] (1);
  \path[] (3) edge [bend right] (1);
  \path[] (4) edge [bend right] (2);
  \path[] (4.south) edge [] (1);
  \path[] (1) edge [bend right] (a);
  \path[] (1.south) edge [] (b);
  \path[] (2) edge [bend right] (b);
  \path[] (2) edge [bend left] (b);
\end{tikzpicture}	
}\quad\quad\quad\quad\quad\quad\subfloat[The 3-gram graph.]{
\begin{tikzpicture}[->,>=stealth',auto, thick,
	main node/.style={rectangle,fill=white, inner sep=0.2pt},
	edlab/.style={circle, inner sep=0.0pt}
	]

  \node[main node] (1) [] {\underline{$\phi(ab)$}};
  \node[main node] (2) [right=2cm of 1] {$\phi(ba)$};
  \node[main node] (3) [below right=0.7cm and 0.5cm of 1] {$\phi(bb)$};

  \path[] (1) edge [edlab] node [above] {$a$} node [below] {$1$} (2);
  \path[] (1) edge [edlab] node [above right] {$b$} node [below left] {$1$} (3);
  \path[] (3) edge [edlab] node [above left] {$a$} node [below right] {$1$} (2);
  \path[] (2) edge [bend right=45, edlab] node [above] {$b$} node [below] {$2$} (1);
  \path[] (3) edge [in=300,out=230,loop, edlab] node [above] {$b$} node [below] {$1$} (3);
\end{tikzpicture}
}

\subfloat[CS-tree.]{
\begin{tikzpicture}[-,>=stealth', thick, label distance=0.0mm,
	main node/.style={draw=black, fill=white, circle, inner sep=0pt, minimum size=4pt},
	edlab/.style={circle, inner sep=0.0pt}
	]

  \node[main node] (1) [] {};
  \node[main node] (2) [below=0.6cm of 1] {};
  \node[main node] (3) [below=0.6cm of 2, label={[xshift=-0.00cm]0:$\phi(ab)$}] {};
  \node[main node] (4) [below left=0.9cm and 0.6cm of 3, label={[xshift=-0.00cm]0:$\phi(ba)$}] {};
  \node[main node] (5) [below left=0.9cm and 0.6cm of 4, label={[xshift=-0.00cm]0:$\phi(ab)$}] {};

  \node[main node] (6) [below right=0.9cm and 0.6cm of 3, label={[xshift=-0.00cm]0:$\phi(bb)$}] {};
  \node[main node] (7) [below left=0.9cm and 0.6cm of 6, label={[xshift=-0.00cm]0:$\phi(ba)$}] {};

  \node[main node] (8) [below right=0.9cm and 0.6cm of 6, label={[xshift=-0.00cm]0:$\phi(bb)$}] {};

  \path[] (1) edge [edlab] node [left] {$a$} (2);
  \path[] (2) edge [edlab] node [left] {$b$} (3);
  \path[] (3) edge [edlab] node [above left] {$a$} node [below right] {$1$} (4);
  \path[] (4) edge [edlab] node [above left] {$b$} node [below right] {$2$} (5);
  \path[] (3) edge [edlab] node [below left] {$b$} node [above right] {$1$} (6);
  \path[] (6) edge [edlab] node [above left] {$a$} node [below right] {$1$} (7);
  \path[] (6) edge [edlab] node [below left] {$b$} node [above right] {$1$} (8);
\end{tikzpicture}
}\subfloat[Suffix tree.]{
\begin{tikzpicture}[-,>=stealth',auto, thick, sloped,
	main node/.style={draw=black, fill=white, circle, inner sep=0.4pt, minimum size=4pt},
	edlab/.style={sloped,above,distance=-10},
	mark1/.style={circle, inner sep=0pt, postaction={decorate}, decoration={markings,mark=at position .54 with {\arrow[draw=red]{|}}} },
	mark2/.style={circle, inner sep=0pt, postaction={decorate}, decoration={markings,mark=at position .60 with {\arrow[draw=red]{|}}} }
	]

  \node[main node] (3) [] {};
  \node[main node] (4) [below left=0.9cm and 1cm of 3] {};
  \node[main node] (5) [below right=0.9cm and 1.2cm of 3] {};

  \node[main node] (12) [below left=0.9cm and 0.2cm of 4] {};
  \node[main node] (13) [below right=0.9cm and 0.2cm of 4] {};

  \node[main node] (6) [below left=0.9cm and 0.3cm of 12, label=below:$\phi(ba)$, label=right:$1$] {};
  \node[main node] (7) [below right=0.9cm and 0.3cm of 12, label=below:$\phi(ba)$, label=right:$1$] {};

  \node[main node] (8) [below right=0.9cm and 0.73cm of 5] {};
  \node[main node] (9) [below left=0.9cm and 0.73cm of 5] {};

  \node[main node] (14) [below left=0.9cm and 0.3cm of 9, label=below:$\phi(ab)$, label=right:$2$] {};
  \node[main node] (15) [below right=0.9cm and 0.3cm of 9, label=below:$\phi(ab)$] {};

  \node[main node] (10) [below left=0.9cm and 0.3cm of 8, label=below:$\phi(bb)$, label=right:$1$] {};
  \node[main node] (11) [below right=0.9cm and 0.3cm of 8, label=below:$\phi(bb)$, label=right:$1$] {};

  \path[] (3) edge []  node [edlab] {$a\,$}  (4);
  \path[] (4) edge []  node [edlab] {$b\,$}  (12);
  \path[] (4) edge [] node [edlab] {$\$\,$} (13);
  \path[] (3) edge [] node [edlab] {$b\,$} (5);
  \path[] (12) edge [] node [edlab] {$a\$\,$} (6);
  \path[] (12) edge [] node [edlab] {$ba\$\,$} (7);

  \path[] (5) edge [] node [edlab] {$b\,$} (8);
  \path[] (5) edge [] node [edlab] {$a\,$} (9);
  \path[] (8) edge [] node [edlab] {$a\$\,$} (10);
  \path[] (8) edge [] node [edlab] {$ba\$\,$} (11);
  \path[] (9) edge [] node [edlab] {$ba\$\,$} (14);
  \path[] (9) edge [] node [edlab] {$\$\,$} (15);

\end{tikzpicture}
}
\caption{An SLP compressing the string \texttt{ababbbab}, and the data structures after each step of the algorithm executed with $q=3$.}\label{example}

\end{center}
\end{figure}
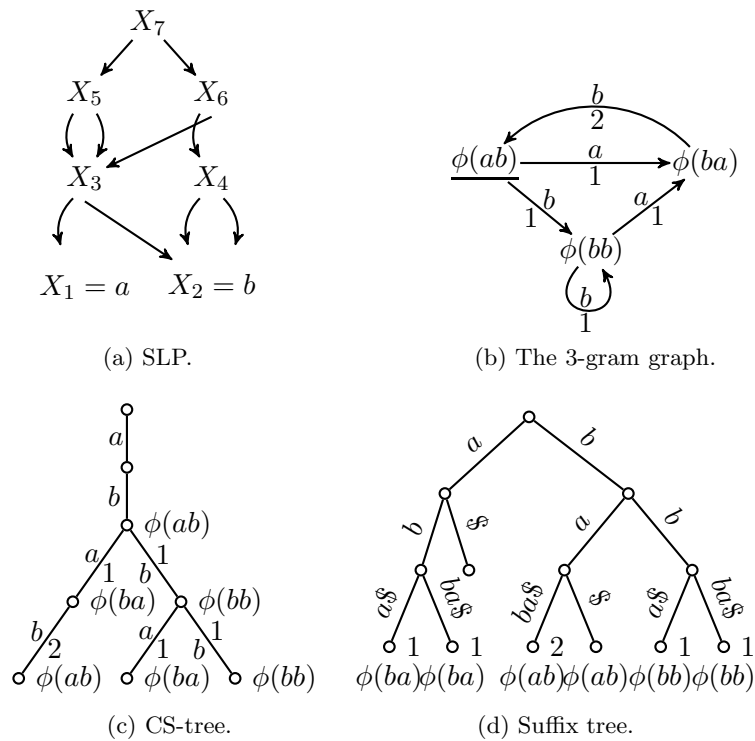

Our algorithm resembles the one by Goto~et~al.~\cite{goto12}. The main difference between the algorithms is that Goto~et~al. use the so-called neighbour graph to capture the q-grams of \str where we use the q-gram graph. This is also the key to the improvement in space usage. If a q-gram occurs in several relevant substrings it will occur several times in the neighbour graph but only once in the q-gram graph.

\subsection{Correctness}
Before showing that our algorithm is correct, we will prove some crucial properties of the q-gram graph, the CS-tree, and the suffix tree of the CS-tree subsequent to their construction in the algorithm.

\begin{lemma}\label{lem:connectedness}
The q-gram graph \qg constructed from the SLP is connected.
\end{lemma}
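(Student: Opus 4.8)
The plan is to show that the construction algorithm never creates an isolated node: every node other than the start node is introduced together with an incoming edge from an already-existing node, so the graph remains connected throughout its construction. I would argue this by a careful reading of how the algorithm processes each relevant substring, and then chain these local connections back to the start node.

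First I would recall the single-string online construction of \qg described before Lemma~\ref{lem:graphcons}. There, starting from the start node, each new $(q-1)$-gram $\str[k+1:k+(q-1)]$ is processed so that whenever a node is created it is created \emph{along with} an edge from the node for the preceding $(q-1)$-gram $\str[k:k+(q-1)-1]$. Thus in the pure single-string algorithm every non-start node acquires an incoming edge at the moment of its creation, and a trivial induction on the number of characters processed shows that every node is reachable from the start node. I would make this induction explicit: the invariant is that after processing the prefix $\str[0:k+(q-1)-1]$, the partial graph $G_q(\str[0:k+(q-1)-1])$ is connected.

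The main obstacle is that the algorithm in the ``Computing the q-gram graph'' paragraph does not process $\str$ as one stream; instead it iterates over production rules $X_i=X_lX_r$ with $|X_i|\geq q$ and, for each, decompresses the relevant substring $r_{X_i}$, locates (or creates) the node for its first $(q-1)$-gram $\phi(r_{X_i}[0:(q-1)-1])$, and then runs the online construction along $r_{X_i}$. So I must handle the node that is the \emph{entry point} of each relevant substring: this node may be freshly created without an incoming edge from within the current $r_{X_i}$. The key point I would establish is that $r_{X_i}[0:(q-1)-1]$ is always a $(q-1)$-gram that also occurs inside $\str$ as part of some larger structure, and in particular that every $(q-1)$-gram label appearing in the graph corresponds to an actual occurrence in $\str$ and therefore lies on the single-string ``spine'' $\phi(\str[i:i+q-2])$, $i=0,\dots,|\str|-q+1$. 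Concretely, by Lemma~\ref{lem:qgramsize}'s argument, for every node with label $\phi(\str[i:i+q-2])$ with $i\geq 1$ there is a node $\phi(\str[i-1:i+q-3])$ with an edge into it; iterating back to $i=0$ reaches the start node $\phi(\str[0:(q-1)-1])$.

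I would therefore conclude as follows. Every edge created by the algorithm corresponds to a genuine length-$q$ substring $x\alpha y$ of $\str$, and every node to a genuine $(q-1)$-gram occurrence in $\str$. Hence the set of nodes and edges produced by the grammar-based algorithm is exactly the set that would arise from the single-string online algorithm run on $\str$ (only the counters differ, since we increment by $occ(X_i)$). Since that single-string graph is connected by the induction above and contains the start node, \qg is connected. The delicate step to get right is the claim that the grammar-based algorithm produces precisely the same node/edge set as the single-string algorithm --- i.e., that relevant substrings collectively cover every consecutive pair of overlapping $(q-1)$-grams of $\str$, which is exactly the content of Lemma~\ref{lem:sufficiency2} applied at the level of individual q-gram occurrences rather than their counts; I would lean on that lemma to justify completeness of coverage and thus reachability of every node back to the start node.
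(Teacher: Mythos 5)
Your argument is correct, but it takes a different route from the paper's. You reduce the problem to the uncompressed setting: you argue that the grammar-driven construction produces exactly the node and edge set of the single-string graph $G_q(\str)$ (nodes and edges of relevant substrings are genuine $(q-1)$-grams and q-grams of $\str$ since each $r_{X_i}$ is a substring of $\str$, and conversely Lemma~\ref{lem:sufficiency2} guarantees every q-gram of $\str$ occurs in some relevant substring), and then you establish connectivity of $G_q(\str)$ by walking each node back along the spine $\phi(\str[i:i+q-2])$ to the start node at $i=0$. The paper instead argues by structural induction over the SLP: for $|X_i|\le 2(q-1)$ the whole of $t_{X_i}$ is inserted and $G_q(t_{X_i})$ is connected as a single-string graph; for the inductive step, $G_q(r_{X_i})$ is itself connected and shares a node with $G_q(t_{X_l})$ (the first $q-1$ characters of $r_{X_i}$ form a suffix of $t_{X_l}$, whose node is present by Lemma~\ref{lem:sufficiency2}) and likewise with $G_q(t_{X_r})$, so the union is connected. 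Your approach has the advantage of being entirely insensitive to the order in which relevant substrings are inserted and of isolating the one global fact that matters (coverage of all q-grams of $\str$); the paper's induction is more local and mirrors the recursive structure used elsewhere (e.g., it is reused almost verbatim in the improved decompression order of the last section). One small point to tighten in your write-up: for the backward walk you should note that a node whose only occurrence in $\str$ is at position $0$ is precisely the start node, so the descent on the position index always terminates there; and you may remark that fingerprint collisions can only merge nodes, which cannot disconnect the graph, so connectivity holds even without the collision-freeness assumption.
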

\begin{proof}
Consider a production rule $X_i=X_lX_r$. If $|X_i|\leq 2(q-1)$ we decompress the entire string $t_{X_i}$ and insert it into the q-gram graph, and we know that $G_q(t_{X_i})$ is connected. Assume that $G_q(t_{X_l})$ and $G_q(t_{X_r})$ are both connected. We know from Lemma~\ref{lem:sufficiency2} that if we insert all the relevant substrings of the nodes reachable from $X_l$ (including $X_l$) into the graph, then it will contain all $(q-1)$-grams of $t_{X_l}$. Since the first $q-1$ characters of $r_{X_i}$ is a suffix of $t_{X_l}$, the subgraphs $G_q(t_{X_l})$ and $G_q(r_{X_i})$ will have at least one node in common, and similarly for $G_q(t_{X_r})$ and $G_q(r_{X_i})$. Therefore, $G_q(X_i)$ is connected.~\qed

\end{proof}

\begin{lemma}\label{lem:cstree}
Assuming that we are given a fingerprint function $\phi$ that is collision free for substrings of length $q-1$ in $\str$, then the CS-tree built by the algorithm contains each distinct q-gram in \str exactly once.
\end{lemma}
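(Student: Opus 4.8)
The plan is to exhibit an explicit correspondence between the distinct q-grams of \str, the edges of the q-gram graph \qg, and the non-root nodes of the depth-first portion of the CS-tree, and then to check that reading the last $q$ edge labels on the path from such a node up to the root spells exactly the corresponding q-gram (reversed). The statement then reduces to verifying that this correspondence is both onto and one-to-one, so that each distinct q-gram is realised by precisely one node-to-root path of length $q$.

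First I would set up the central invariant: for every node $v$ in the depth-first part of the CS-tree, the string $S_v$ spelled by the edge labels on the root-to-$v$ path has its length-$(q-1)$ suffix equal to the $(q-1)$-gram that $v$ represents in \qg. I prove this by induction on the depth of $v$. The base case is the depth-first root, which by construction sits at the bottom of the prefix path of length $q-1$ whose edge labels are $\str[0], \ldots, \str[q-2]$, so $S_v = \str[0:q-2]$, which is itself the $(q-1)$-gram it represents. For the inductive step, a child $w$ of $v$ is created from a \qg-edge from $\phi(x\alpha)$ to $\phi(\alpha y)$ labelled $y$, where $x\alpha$ is the $(q-1)$-gram at $v$; since $S_w = S_v \cdot y$, its length-$(q-1)$ suffix is $\alpha y$, exactly the $(q-1)$-gram of $w$. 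Consequently every node at depth at least $q$ has $|S_v| \ge q$, and the length-$q$ suffix of $S_v$ is precisely the q-gram $x\alpha y$ attached to the edge that created $v$; reading from $v$ toward the root yields this q-gram reversed, which is what the CS-tree is meant to store.

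Next I would count. Sliding a window of width $q-1$ across \str induces the walk $\phi(\str[0:q-2]), \phi(\str[1:q-1]), \ldots$ through \qg, and every $(q-1)$-gram of \str occurs on it, so every node is reachable from the start node (consistent with Lemma~\ref{lem:connectedness}); hence a depth-first traversal examines each edge exactly once, whether it leads to a fresh node or an already-visited one, and the construction creates exactly one non-root depth-first CS-tree node per traversed edge. Combined with the invariant this gives a map from \qg-edges to q-grams. By the construction of \qg each distinct q-gram $x\alpha y$ contributes exactly one edge, identified by the pair $(\phi(x\alpha), y)$, and by Lemma~\ref{lem:sufficiency2} every distinct q-gram of \str occurs on some relevant substring and is therefore inserted, so the map is onto. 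Injectivity follows because a q-gram determines $(\phi(x\alpha), y)$ and hence its edge uniquely, so two distinct depth-first nodes cannot carry the same q-gram; onto plus injective yields ``exactly once''.

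The step I expect to be the main obstacle, and the only place the hypothesis is genuinely used, is justifying that the nodes of \qg are in an honest one-to-one correspondence with the distinct $(q-1)$-grams of \str. This is exactly what collision-freeness of $\phi$ on length-$(q-1)$ substrings provides: without it two different $(q-1)$-grams could be merged into a single node, so ``the $(q-1)$-gram that $v$ represents'' would be ill-defined and the inductive step of the invariant would fail, since the node reached via fingerprint $\phi(\alpha y)$ might not actually represent $\alpha y$. I would therefore open the proof by invoking the hypothesis to fix the bijection between nodes and $(q-1)$-grams, and make the inductive step explicitly rely on the destination of an edge labelled $y$ out of the node for $x\alpha$ being the node for $\alpha y$. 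Everything else is bookkeeping on the depth-first traversal.
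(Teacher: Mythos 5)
Your proposal is correct and follows essentially the same route as the paper's proof: identify each distinct q-gram with the unique edge of \qg given by the pair $(\phi(x\alpha),y)$ --- using determinism of $\phi$ so that a $(q-1)$-gram cannot end at two different nodes, and collision-freeness so that each node represents a well-defined $(q-1)$-gram --- and observe that the depth-first traversal copies each such edge into the CS-tree exactly once. Your explicit inductive invariant on root-to-node path labels and the surjectivity argument via Lemma~\ref{lem:sufficiency2} simply spell out steps the paper's shorter proof leaves implicit.
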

\begin{proof}
Let $v$ be a node with an outgoing edge $e$ in $\qg$. The combination of the label of $v$ followed by the character on $e$ is a distinct q-gram and occurs only once in \qg due to the way we construct it. There may be several paths of length $q-1$ ending in $v$ spelling the same string $s$, and because the fingerprint function is deterministic, there can not be a path spelling $s$ ending in some other node. Since the depth-first traversal of \qg only visits $e$ once, the resulting CS-tree will only contain the combination of the labels on $v$ and $e$ once.~\qed 

\end{proof}

\begin{lemma}\label{lem:suffixtreeleaf}
Assuming that we are given a fingerprint function $\phi$ that is collision free for substrings of length $q-1$ in $\str$, then any node $v$ in the suffix tree of the CS-tree with $sd(v)\geq q$ is a leaf.
\end{lemma}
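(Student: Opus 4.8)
The plan is to argue by contradiction. I will assume there is a node $v$ in the suffix tree of the CS-tree with $sd(v)\geq q$ that is \emph{not} a leaf, i.e.\ that branches, and derive a conflict with Lemma~\ref{lem:cstree}. The guiding intuition is that a branching node at string depth at least $q$ witnesses two distinct suffixes of the CS-tree that agree on their first $q$ characters; since reading $q$ characters from a CS-tree node up towards the root spells (the reverse of) a q-gram of $\str$, this would place the same q-gram at two different locations in the CS-tree, contradicting that each distinct q-gram occurs there exactly once.

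First I would fix the correspondence between the relevant objects. Every node $u$ of the CS-tree yields exactly one suffix, namely the string read along the edges on the path from $u$ up to the root, and this is precisely the suffix associated with $u$ in the generalized suffix tree; its length equals the depth of $u$ in the CS-tree. Because the CS-tree is the depth-first tree of $\qg$ augmented with the length-$(q-1)$ prefix path at the top, reading $q$ edges upward from any CS-tree node of depth at least $q$ spells the reverse of a length-$q$ substring of $\str$, that is, a q-gram. By Lemma~\ref{lem:cstree}, which uses that $\phi$ is collision free on substrings of length $q-1$, each distinct q-gram is obtained from exactly one such node. Consequently the map sending a CS-tree node of depth $\geq q$ to its length-$q$ upward label is injective.

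Now suppose $v$ is a branching node with $sd(v)=d\geq q$, and let $P$ be the root-to-$v$ string in the suffix tree, so $|P|=d$. Branching at $v$ means there are two distinct suffixes $S_1\neq S_2$ both having $P$ as a prefix (this covers also the case where one of them terminates exactly at $v$). As each CS-tree node gives a single suffix, $S_1$ and $S_2$ originate from two distinct CS-tree nodes $u_1\neq u_2$, each of CS-tree depth at least $d\geq q$ since its suffix passes through $v$. As $S_1$ and $S_2$ both begin with $P$ and $|P|=d\geq q$, they agree on their first $q$ characters, so $u_1$ and $u_2$ have the same length-$q$ upward label and hence represent the same q-gram. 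By the injectivity established above this forces $u_1=u_2$, a contradiction. Therefore no such branching $v$ exists, and every node with $sd(v)\geq q$ is a leaf.

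I expect the main obstacle to be making the correspondence in the second paragraph fully precise rather than the contradiction itself, which is immediate once the correspondence is in hand. Concretely, one must verify the De~Bruijn-style window-shift invariant: reading $q$ edges upward from a node strictly below the depth-first root always traverses that node's incoming (q-gram-graph) edge and then a path spelling the reverse of the $(q-1)$-gram stored at its parent, so that the length-$q$ upward label is exactly the reverse of the q-gram identified in the proof of Lemma~\ref{lem:cstree}. This requires care at the boundary, namely checking the interaction with the length-$(q-1)$ prefix path so that nodes at depth exactly $q$ (whose upward path mixes depth-first-tree edges with prefix-path edges) still spell a genuine q-gram of $\str$.
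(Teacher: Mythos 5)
Your proposal is correct and follows essentially the same route as the paper: the paper's one-line proof likewise observes that by Lemma~\ref{lem:cstree} every suffix of length at least $q$ in the CS-tree has a distinct length-$q$ prefix, so no two suffixes can share a path of string depth $\geq q$ in the suffix tree. Your version merely unpacks this as an explicit contradiction and spells out the node--suffix--q-gram correspondence in more detail.
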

\begin{proof}
Each suffix of length $\geq q$ in the CS-tree has a distinct $q$ length prefix (Lemma \ref{lem:cstree}), so therefore each node in the suffix tree with string depth $\geq q$ is a leaf.~\qed
\end{proof}
We have now established the necessary properties to prove that our algorithm is correct.

\begin{lemma}
Assuming that we are given a fingerprint function $\phi$ that is collision free on all substrings of length $q-1$ of $\str$, our algorithm correctly computes a q-gram profile for $\str$.
\end{lemma}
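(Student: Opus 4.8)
The plan is to prove correctness in two stages. First I would show that, immediately after the q-gram graph is built, the counter on each edge equals the true frequency in \str of the q-gram that edge represents; then I would show that the CS-tree and its suffix tree preserve these counters and index them so that a query for a q-gram $s$ — carried out by searching for $s^R$ — returns this value in $O(q)$ time.

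For the first stage I would fix a distinct q-gram $s = x\alpha y$ with $|\alpha| = q-2$ and follow where its occurrences are tallied. Because $\phi$ is collision free on substrings of length $q-1$, the $(q-1)$-grams $x\alpha$ and $\alpha y$ are each represented by a single node, so $s$ is represented by the unique edge from $\phi(x\alpha)$ to $\phi(\alpha y)$ with label $y$, and no other q-gram maps to this edge. The construction scans the relevant substring $r_{X_i}$ of every rule $X_i \in \slp_q$ and, for each occurrence of $s$ inside $r_{X_i}$, increments this edge's counter by $occ(X_i)$; hence the total contribution of $X_i$ is $socc(s, r_{X_i})\cdot occ(X_i)$. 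Summing over $\slp_q$ and applying Lemma~\ref{lem:sufficiency2} yields a final counter of $\sum_{X_i\in\slp_q} socc(s, r_{X_i})\cdot occ(X_i) = socc(s, \str)$, as desired.

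For the second stage I would lean on the structural lemmas already established. By Lemma~\ref{lem:connectedness} the graph \qg is connected, so its depth-first traversal visits every node and edge, and by Lemma~\ref{lem:cstree} every distinct q-gram of \str occurs exactly once in the CS-tree. Since the construction copies node labels and edge counters verbatim, I would make explicit the correspondence between each q-gram-graph edge and the CS-tree edge it produces, and check that reading a leaf upward toward the root spells, in its first $q$ characters, the reversal $s^R$ of the q-gram $s$ carried by that leaf's first edge, whose counter is $socc(s, \str)$ by the first stage. The generalized suffix tree indexes exactly these leaf-to-root strings, so to answer a query for $s$ I match $s^R$ downward from the root; by Lemma~\ref{lem:suffixtreeleaf} after $q$ matched characters I stand at a leaf, and the first-edge counter stored there is $socc(s, \str)$. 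When $s$ does not occur in \str, $s^R$ is not a prefix of any indexed suffix, the search fails, and the query correctly reports $0$; in both cases only $O(q)$ characters are inspected.

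The step I expect to be the main obstacle is the bookkeeping in the first stage: verifying that during the scan of each $r_{X_i}$ the edge for $s$ is incremented exactly $socc(s, r_{X_i})$ times, so that every relevant occurrence is counted, none is counted twice, and none is assigned to a wrong edge. This is precisely where collision freeness of $\phi$ on $(q-1)$-grams is indispensable, since a single collision would merge two distinct $(q-1)$-grams into one node and corrupt the counters. The reversal bookkeeping of the second stage is delicate to state but becomes routine once the edge correspondence between \qg and the CS-tree is pinned down.
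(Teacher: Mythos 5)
Your proposal is correct and follows essentially the same route as the paper's own proof: establish via Lemma~\ref{lem:sufficiency2} that the edge counters in $\qg$ accumulate to $socc(s,\str)$, then invoke Lemmas~\ref{lem:connectedness}, \ref{lem:cstree}, and \ref{lem:suffixtreeleaf} to show the CS-tree and suffix tree preserve this information and support $O(q)$ queries. Your version is merely more explicit about the edge-uniqueness role of collision freeness and the reversal bookkeeping, both of which the paper leaves implicit.
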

\begin{proof}
Our algorithm inserts each relevant substring $r_{X_i}$ exactly once, and if a q-gram $s$ occurs $socc(s, r_{X_i})$ times in $r_{X_i}$, the counter on the edge representing $s$ is incremented by exactly $socc(s, r_{X_i})\cdot occ(X_i)$. From Lemma~\ref{lem:sufficiency2} we then know that when \qg is fully constructed, the counters on its edges correspond to the frequencies of the q-grams in $\str$. Since $\qg$ is connected (Lemma~\ref{lem:connectedness}) the tree created by the algorithm is a CS-tree that contains each q-gram from \qg exactly once (Lemma~\ref{lem:cstree}). Finally, we know from Lemma~\ref{lem:suffixtreeleaf} that a node $v$ with $sd(v)\geq q$ in the suffix tree is a leaf, so searching for a string of length $q$ in the suffix tree will yield a unique result and can be done in $O(q)$ time.~\qed

\end{proof}

\subsection{Analysis}
\begin{theorem}
The algorithm runs in $O(qn)$ expected time and uses $O(n+q+\kq)$ space.
\end{theorem}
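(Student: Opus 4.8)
The plan is to account for the time and space of each of the four steps separately, and to observe that an intermediate structure can be discarded as soon as the next one has been built, so that the peak space is that of \slp together with the single largest structure. Preprocessing is cheap: computing all $|X_i|$ and $occ(X_i)$ uses the two linear passes of Section~\ref{sub:straight_line_programs}, and preparing \slp and its reversal for prefix and suffix decompression costs $O(n)$ time and $O(n)$ space by Lemma~\ref{lem:lineardecomp}; hence this step is $O(n)$ time and $O(n)$ space.

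The dominant step is the construction of \qg. For each of the at most $n$ rules $X_i=X_lX_r$ with $|X_i|\ge q$, I decompress the relevant substring $r_{X_i}$, of length at most $2(q-1)=O(q)$, in $O(q)$ time via the prefix and suffix decompression of Lemma~\ref{lem:lineardecomp}. Running the online construction on $r_{X_i}$ costs $O(q)$ for the initial $(q-1)$-gram fingerprint, $O(1)$ for each subsequent fingerprint by the update rule, and $O(1)$ expected per node/edge dictionary operation using hashing; so each rule takes $O(q)$ expected time and the step runs in $O(qn)$ expected time. By Lemma~\ref{lem:qgramsize} the graph and its two hash dictionaries occupy $O(\kq)$ space, one relevant substring needs $O(q)$ working space, and \slp needs $O(n)$, for a total of $O(n+q+\kq)$.

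Once \qg is built, \slp may be released. A depth-first traversal of \qg visits each of its $O(\kq)$ edges exactly once and creates one CS-tree node per visit, while the prepended path of the first $q-1$ characters adds $q-1$ nodes; thus the CS-tree has $O(\kq+q)$ nodes and is produced in $O(\kq+q)$ time and space. To invoke Lemma~\ref{lem:suffixtreeoftree} I first reduce the alphabet: the edge labels comprise at most $O(\kq+q)$ distinct characters, which I remap with a hash table to $\{1,\ldots,O(\kq+q)\}$ in $O(\kq+q)$ expected time, yielding an alphabet polynomial in the CS-tree size. Lemma~\ref{lem:suffixtreeoftree} then builds the suffix tree in $O(\kq+q)$ time and space, after which the CS-tree is discarded. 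Summing the four steps gives $O(qn)$ expected time, and since \slp coexists only with structures of size $O(q+\kq)$, the peak space is $O(n+q+\kq)$.

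The step needing most care is the last one: Lemma~\ref{lem:suffixtreeoftree} demands an alphabet polynomial in the tree size, whereas the characters of \str lie in $\{1,\ldots,2^w\}$ with $w\ge\log N$, so the alphabet-reduction argument is essential, and the same remapping must also be applied to query strings, with an unmapped character then signalling frequency zero. The other point to pin down is that the CS-tree and suffix tree have size $O(\kq+q)$ rather than size proportional to the number of relevant substrings; this follows from Lemma~\ref{lem:qgramsize}, since every CS-tree node corresponds to a distinct edge of \qg or to one of the $q-1$ prefix characters.
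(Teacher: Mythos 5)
Your proof is correct and follows essentially the same step-by-step accounting as the paper's own proof: $O(n)$ preprocessing, $O(qn)$ expected time for decompressing and inserting the $O(n)$ relevant substrings of length $O(q)$, and $O(q+\kq)$ for the CS-tree and suffix tree, with the alphabet hashed to a polynomial range before invoking Lemma~\ref{lem:suffixtreeoftree}. The only detail the paper makes explicit that you leave implicit is that $\kq\leq\sum_{X_i\in\slp_q}|r_{X_i}|=O(qn)$, which is what lets the $O(q+\kq)$ terms be absorbed into the final $O(qn)$ time bound.
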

\begin{proof} Let $\slp_q=\{X_i \mid X_i\in \slp \text{ and } |X_i|\geq q \}$ be the set of production rules that have a relevant substring. For each production rule $X_i=X_lX_r \in \slp_q$ we decompress its relevant substring of size $|r_{X_i}|$ and insert it into the q-gram graph. Since $r_{X_i}$ is comprised of the suffix of $t_{X_l}$ and the prefix of $t_{X_r}$ we know from Lemma~\ref{lem:lineardecomp} that $r_{X_i}$ can be decompressed in $O(|r_{X_i}|)$ time. Inserting $r_{X_i}$ into the q-gram graph can be done in $O(|r_{X_i}|)$ expected time (Lemma~\ref{lem:graphcons}). Since $|\slp _q|=O(n)$ and $q\leq |r_{X_i}|\leq 2(q-1)$ this step of the algorithm takes $O(qn)$ time. When transforming the q-gram graph to a CS-tree we do one traversal of the graph and add $q-1$ nodes, so this step takes $O(q+\kq)$ time. Constructing the suffix tree takes expected linear time in the size of the CS-tree if we hash the characters of the alphabet to a polynomial range first (Lemma~\ref{lem:suffixtreeoftree}). Finally, observe that since our algorithm is correct, it detects all q-grams in \str and therefore there can be at most $\kq\leq \sum_{X_i\in \slp_q}|r_{X_i}|=O(qn)$ distinct q-grams in $\str$. Thus, the expected running time of our algorithm is $O(qn)$.

In the preprocessing step of our algorithm we use $O(n)$ space to store the size of the derived substrings and the number of occurrences in the derivation tree as well as the data structure needed for linear time prefix and suffix decompressions (Lemma~\ref{lem:lineardecomp}). The space used by the q-gram graph is $O(\kq)$, and when transforming it to a CS-tree we add at most one new node per edge in the graph and extend it by $q-1$ nodes and edges. Thus, its size is $O(q+\kq)$. The CS-tree contains $O(q+\kq)$ suffixes, so the size of the suffix tree is $O(q+\kq)$. In total our algorithm uses $O(n+q+\kq)$ space.~\qed

\end{proof}

\subsection{Verifying the fingerprint function}

Until now we have assumed that the fingerprints used as labels for the nodes in the q-gram graph are collision free. In this section we describe an algorithm that verifies if the chosen fingerprint function is collision free using the suffix tree resultant from our algorithm.

If there is a collision among fingerprints, the q-gram graph construction algorithm will add an edge such that there are two paths of length $q-1$ ending in the same node while spelling two different strings. This observation is formalized in the next lemma.

\begin{lemma}\label{lem:qgramverification}
For each node $v$ in $\qg$, if every path of length $q-1$ ending in $v$ spell the same string, then the fingerprint function used to construct $\qg$ is collision free for all $(q-1)$-grams in $\str$.
\end{lemma}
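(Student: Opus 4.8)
The plan is to prove the contrapositive: if $\phi$ is \emph{not} collision free on the $(q-1)$-grams of $\str$, then $\qg$ contains a node $v$ with two paths of length $q-1$ ending in $v$ that spell two different strings. Since $\phi$ is deterministic, equal strings always receive equal fingerprints, so failing to be collision free means precisely that there exist two \emph{distinct} $(q-1)$-grams $s_1 \neq s_2$, both occurring in $\str$, with $\phi(s_1) = \phi(s_2)$; by construction both are represented by the \emph{same} node $v$ of $\qg$. The proof then reduces to exhibiting, for each of $s_1$ and $s_2$, a length-$(q-1)$ path into $v$ that spells it.

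First I would make explicit the correspondence between a sliding window over $\str$ and walks in $\qg$. Sliding a window of length $q-1$ one step, from position $k$ to $k+1$, corresponds exactly to traversing the edge labeled $\str[k+q-1]$ from the node of $\str[k:k+q-2]$ to the node of $\str[k+1:k+q-1]$; this is how the construction adds edges. Iterating, for any occurrence of a $(q-1)$-gram $s$ at a position $i \geq q-1$, the $q-1$ consecutive edges corresponding to the window shifts ending at position $i$ form a path that ends in the node of $s$ and whose edge labels, read in order, spell exactly $s = \str[i:i+q-2]$. Applying this to occurrences of $s_1$ and $s_2$ at positions $\geq q-1$ yields two paths of length $q-1$ ending in the common node $v$ and spelling $s_1$ and $s_2$ respectively; as $s_1 \neq s_2$ the two spellings differ, which is the contrapositive.

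The step I expect to be the main obstacle is the \emph{boundary case}, where one of the colliding grams --- say $s_2$ --- occurs only at positions $< q-1$, i.e.\ it lives entirely inside the length-$(q-1)$ prefix of $\str$ and therefore has no full length-$(q-1)$ incoming path in $\qg$. Here I would invoke the special \emph{start node}: the construction designates it as the node of the first $(q-1)$-gram $\str[0:q-2]$ and, when forming the CS-tree, attaches a canonical path of length $q-1$ whose edge labels spell $\str[0:q-2]$. This designated path supplies the missing spelling for a prefix-confined gram, so a colliding pair in which one gram is trapped in the prefix still produces two differently spelled length-$(q-1)$ paths into $v$; the only grams requiring this extra care are the at most $q-1$ that never recur past the prefix, and each can alternatively be checked directly against the start node's stored label. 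Reconciling this designated start-node content with the paths arising from genuine interior occurrences is the delicate part, whereas the interior case is a routine window-to-walk translation.
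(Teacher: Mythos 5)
Your main line of argument is essentially the paper's: its (very terse) proof also rests on the observation that the construction lays down edges in reading order, so that each occurrence of a $(q-1)$-gram yields a length-$(q-1)$ path into its node spelling that gram, whence two colliding grams $s_1\neq s_2$ force two differently spelled paths into their common node; you simply state this as a contrapositive and make the window-to-walk correspondence explicit, which is a gain in precision. One caveat on your interior case: $\qg$ is built by inserting relevant substrings of length at most $2(q-1)$ one at a time, each insertion restarting at a fingerprint-determined node, not by a single left-to-right sweep of $\str$. The path you associate with an occurrence at position $i\geq q-1$ is therefore not created by one insertion but assembled from $q-1$ edges contributed by possibly different insertions. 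The argument survives --- every q-gram of $\str$ becomes an edge of $\qg$ by Lemma~\ref{lem:sufficiency2}, and node identity is fixed by the deterministic fingerprint, so consecutive edges do concatenate into a path in the final graph --- but this needs to be said, since ``this is how the construction adds edges'' is literally true only inside a single relevant substring.

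The boundary case, which you yourself flag as the crux, is where your proof does not close. The patch you propose is not available: the canonical length-$(q-1)$ path spelling $\str[0:q-2]$ is attached during the \emph{CS-tree} construction and is not part of $\qg$, while the lemma quantifies over paths in $\qg$; similarly, ``checking against the start node's stored label'' is not part of the lemma's hypothesis and so cannot be used to derive its conclusion. If a gram $s_2$ occurs only at positions $<q-1$, its node need not admit any length-$(q-1)$ incoming path spelling $s_2$, so a collision between $s_2$ and some $s_1$ can leave the hypothesis satisfied at every node and the contrapositive fails to produce the required second path. You have correctly located the gap but not filled it --- although, to be fair, the paper's own proof does not address this case at all.
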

\begin{proof}
From the q-gram graph construction algorithm we know that we create a path of characters in the same order as we read them from $\str$. This means that every path of length $q-1$ ending in a node $v$ represents the $q-1$ characters generating the fingerprint stored in $v$, regardless of what comes before those $q-1$ characters. If all the paths of length $q-1$ ending in $v$ spell the same string $s$, then we know that there is no other substring $s'\neq s$ of length $q-1$ in \str that yields the fingerprint $\phi(s)$.~\qed  
\end{proof}
It is not straightforward to check Lemma~\ref{lem:qgramverification} directly on the q-gram graph without using too much time. However, the error introduced by a collision naturally propagates to the CS-tree and the suffix tree of the CS-tree, and as we shall now see, the suffix tree offers a clever way to check for collisions. First, recall that in a leaf $v$ in the suffix tree, we store the fingerprint of the reversed prefix of length $q-1$ of the suffix ending in $v$. Now consider the following property of the suffix tree.

\begin{lemma}
Let $v_\phi$ be the fingerprint stored in a leaf $v$ in the suffix tree. The fingerprint function $\phi$ is collision free for $(q-1)$-grams in \str if $v_\phi\neq u_\phi$ or $sd(nca(v,u))\geq q-1$ for all pairs $v,u$ of leaves in the suffix tree.
\end{lemma}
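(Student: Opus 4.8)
The plan is to prove the contrapositive and reduce to Lemma~\ref{lem:qgramverification}: I will show that if $\phi$ is \emph{not} collision free on the $(q-1)$-grams of \str, then some pair of leaves $v,u$ violates the stated condition, i.e.\ $v_\phi=u_\phi$ while $sd(nca(v,u))<q-1$. Throughout I restrict attention to leaves whose suffix has length at least $q-1$, so that a stored fingerprint and a well-defined $(q-1)$-gram prefix exist; the remaining leaves come from the prepended length-$(q-1)$ path and are irrelevant to collisions.

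First I would fix the dictionary between the suffix tree and the $(q-1)$-grams. A suffix read from the root of the suffix tree corresponds to walking from the starting node of the suffix up toward the root of the CS-tree, so $sd(nca(v,u))$ equals the length of the longest common prefix of the two suffixes; in particular $sd(nca(v,u))\ge q-1$ exactly when the two suffixes agree on their first $q-1$ characters. I would then argue that the first $q-1$ characters of the suffix at a leaf $v$, read upward in the CS-tree and reversed, spell a $(q-1)$-gram $w_v$ occurring in \str, and that the stored fingerprint satisfies $v_\phi=\phi(w_v)$. The reason is that the starting node of the suffix is a copy of some node $N$ of \qg whose label is precisely the fingerprint of that $(q-1)$-gram, and the characters on the first $q-1$ upward edges were copied verbatim from the edge labels of \qg along a length-$(q-1)$ path ending at $N$. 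Consequently $v_\phi=u_\phi$ says that $v$ and $u$ are copies of the same node $N$, while $sd(nca(v,u))<q-1$ says that the two length-$(q-1)$ paths ending at $N$ which they realize spell different strings.

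Next I would establish coverage. The CS-tree construction keeps every edge of \qg, and \qg is connected (Lemma~\ref{lem:connectedness}), so the depth-first traversal reaches every node and materializes every edge; hence for each node $N$ and each length-$(q-1)$ path ending at $N$ there is a leaf whose first $q-1$ upward characters realize that path. Combined with the dictionary above, the hypothesis ``$v_\phi\neq u_\phi$ or $sd(nca(v,u))\ge q-1$ for all leaf pairs'' forces that, for every node $N$, all length-$(q-1)$ paths ending at $N$ spell the same string; Lemma~\ref{lem:qgramverification} then yields that $\phi$ is collision free for $(q-1)$-grams, which is the conclusion. In contrapositive form, a collision $w\neq w'$ with $\phi(w)=\phi(w')$ produces leaves $v,u$ with $w_v=w$ and $w_u=w'$, so $v_\phi=\phi(w)=\phi(w')=u_\phi$, yet $w^R\neq w'^R$ forces $sd(nca(v,u))<q-1$, the desired violating pair.

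The step I expect to be the main obstacle is making the dictionary of the second paragraph airtight, specifically the claim $v_\phi=\phi(w_v)$ where $w_v^R$ is what is actually read upward. The upward path in the CS-tree follows depth-first-tree edges above the immediate predecessor, so it reflects the context in which the ancestor nodes were first discovered rather than the context of the particular path; a \emph{nested} collision at an intermediate node could in principle make the read-up string deviate from the $(q-1)$-gram whose fingerprint labels $N$. I would control this by choosing, among all colliding pairs, one whose two $(q-1)$-grams share the longest common suffix $\gamma$, so that they first differ at position $q-1-|\gamma|$ from the front; this localizes the discrepancy to a single edge and shows the two realized paths differ within their first $q-1$ characters, giving $sd(nca(v,u))=|\gamma|<q-1$. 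Alternatively one can induct on tree depth, using that any intermediate deviation is itself a detectable collision and hence already produces a violating pair.
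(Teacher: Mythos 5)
Your overall route is the same as the paper's: argue the contrapositive, showing that a collision forces two copies of the same node of \qg whose upward paths in the CS-tree disagree within their first $q-1$ characters, hence a leaf pair with $v_\phi=u_\phi$ and $sd(nca(v,u))<q-1$. You have also correctly isolated the real difficulty, which the paper's own three-line proof waves away with ``regardless of the order of the nodes in the depth-first traversal'': a nested collision at an intermediate node can make the string read upward from a copy of $N$ deviate from the $(q-1)$-gram whose fingerprint labels $N$. However, two of your supporting steps do not hold as stated. First, the coverage claim is false. In the CS-tree a node $N$ of \qg receives one copy per incoming edge, and the upward path from such a copy consists of that single incoming edge followed by the \emph{DFS-tree} path above the predecessor's primary copy; a length-$(q-1)$ path of \qg containing a non-tree edge anywhere but in its final position is realized by no leaf. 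Consequently ``every leaf pair satisfies the condition'' constrains only the realized paths and does not deliver the hypothesis of Lemma~\ref{lem:qgramverification} (``\emph{every} path of length $q-1$ ending in $v$ spells the same string''), so the forward reduction to that lemma does not go through.

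Second, your extremal patch points the wrong way. Shifting a colliding pair $(s,s')$ one position to the left in $T$ \emph{shortens} their common suffix by one, so the descent you want comes from choosing the colliding pair with the \emph{shortest} longest common suffix, not the longest; and even then, to conclude that the two resulting copies of $N$ carry distinct upward strings you must already know that the DFS-tree paths above the two (now distinct) predecessors spell the correct strings, which is itself the inductive claim you are trying to avoid. The clean repair is to prove the un-contraposed statement directly: assuming every leaf pair satisfies the condition, all copies of a node $M$ share one upward string $\Pi(M)$ of length $q-1$, and every edge $(P,M)$ of \qg with label $y$ forces $\Pi(M)=y\cdot\Pi(P)[0:q-3]$. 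The prepended path gives $\Pi$ of the start node equal to $(T[0:q-2])^R$, and induction on $i$ along the edges created by the q-grams $T[i:i+q-1]$ (all present in \qg by Lemma~\ref{lem:sufficiency2}) yields $\Pi\bigl(\mathrm{node}(\phi(T[i:i+q-2]))\bigr)=(T[i:i+q-2])^R$ for every $i$; a collision would then equate two distinct reversed $(q-1)$-grams, a contradiction. This is your ``induct on tree depth'' alternative made precise, and it is the argument the paper's proof implicitly relies on.
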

\begin{proof}
Consider the contrapositive statement: If $\phi$ is not collision free on \str then there exists some pair $v,u$ for which $v_\phi = u_\phi$ and $sd(nca(v,u))< q-1$. Assume that there is a collision. Then at least two paths of length $q-1$ spelling the same string end in the same node in $\qg$. Regardless of the order of the nodes in the depth-first traversal of $\qg$, the CS-tree will have two paths of length $q-1$ spelling different strings and yet starting in nodes storing the same fingerprint. Therefore, the suffix tree contains two suffixes that differ by at least one character in their $q-1$ length prefix while ending in leaves storing the same fingerprint, which is what we want to show.~\qed
\end{proof}
Checking if there exists a pair of leaves where $v_\phi = u_\phi$ and $sd(nca(v,u))< q-1$ is straightforward. For each leaf we store a pointer to its ancestor $w$ that satisfies $sd(w)\geq q-1$ and $sd(parent(w))< q-1$. Then we visit each leaf $v$ again and store $v_\phi$ in a dictionary along with the ancestor pointer just defined. If the dictionary already contains $v_\phi$ and the ancestor pointer points to a different node, then it means that $v_\phi = u_\phi$ and $sd(nca(v,u))< q-1$ for some two leaves.

The algorithm does two passes of the suffix tree which has size $O(q+\kq)$. Using a hashing scheme for the dictionary we obtain an algorithm that runs in $O(q+\kq)$ expected time.



\subsection{Eliminating redundant decompressions} 
We now present an alternative approach to constructing the q-gram graph from the SLP. The resulting algorithm decompresses fewer characters.

In our first algorithm for constructing the q-gram graph we did not specify in which order to insert the relevant substrings into the graph. For that reason we do not know from which node to resume construction of the graph when inserting a new relevant substring. So to determine the node to continue from, we need to compute the fingerprint of the first $(q-1)$-gram of each relevant substring. In other words, the relevant substrings are overlapping, and consequently some characters are decompressed more than once. Our improved algorithm is based on the following observation. Consider a production rule $X_i=X_lX_r$. If all relevant substrings of production rules reachable from $X_l$ (including $r_{X_l}$) have been inserted into the graph, then we know that all q-grams in $t_{X_l}$ are in the graph. Since the $q-1$ length prefix of $r_{X_i}$ is also a suffix of $t_{X_l}$, then we know that a node with the label $\phi(r_{X_i}[0:(q-1)-1])$ is already in the graph. Hence, after inserting all relevant substrings of production rules reachable from $X_l$ we can proceed to insert $r_{X_i}$ without having to decompress $r_{X_i}[0:(q-1)-1]$.

\paragraph{Algorithm.}
First we compute and store the size of the relevant substring $|r_{X_i}|=\min(q-1, |X_l|)+\min(q-1, |X_r|)$ for each production rule $X_i=X_lX_r$ in the subset $\slp_q=\{X_i \mid X_i \in \slp \text{ and } X_i\geq q\}$ of the production rules in the SLP. We maintain a linked list $L$ with a pointer to its head and tail, denoted by $head(L)$ and $tail(L)$. The list is initially empty. 

We now start decompressing \str by traversing the SLP depth-first, left-to-right. When following a pointer from $X_i$ to a right child, and $X_i\in \slp_q$, we add $X_i$ and the sentinel value $|r_{X_i}|-(q-1)$ to the back of $L$. As characters are decompressed they are fed to the q-gram graph construction algorithm, and when a counter on an edge in $\qg$ is incremented, we increment it by $occ(head(L))$. For each character we decompress, we decrement the sentinel value for $head(L)$, and if this value becomes $0$ we remove the head of the list and set $head(L)$ to be the next production rule in the list. Note that when $L$ is empty in the beginning of the execution of the algorithm we do not alter any values.

When leaving a node $X_i\in \slp_q$ we mark it as visited and store a pointer from $X_i$ to the node with label $\phi(t_{X_i}[|X_i|-(q-1):|X_i|-1])$ in $\qg$, i.e., the node labelled with the suffix of length $q-1$ of $t_{X_i}$. To do this we need to consider two cases. Let $X_i=X_lX_r$. If $X_r\in \slp_q$ then we copy the pointer from $X_r$. If $X_r\notin \slp_q$ then $\phi(t_{X_i}[|X_i|-(q-1):|X_i|-1])$ is the most recently visited node in $\qg$. 

If we encounter a node that has been marked as visited, we decompress its prefix of length $q-1$ using the data structure of Lemma~\ref{lem:lineardecomp}, set the node with label $\phi(t_{X_i}[|X_i|-(q-1):|X_i|-1])$ to be the node from where construction of the q-gram graph should continue, and do not proceed to visit its children nor add it to $L$.

\paragraph{Analysis.}
Assume without loss of generality that the algorithm is at a production rule deriving the string $t_{X_i}=t_{X_l}t_{X_r}$ and all q-grams in $t_{X_l}$ are in $\qg$. There is always such a rule, since we start by decompressing the string derived by the left child of the leftmost rule in $\slp_q$. For each variable $X_i$ added to $L$ we decompress $|r_{X_i}|-(q-1)$ characters before $X_i$ is removed from the list. We only add a variable once to the list, so the total number of characters decompressed is at most $(q-1)+\sum_{X_i\in \slp_q}|r_{X_i}|-(q-1)=O(N-\alpha)$, and we hereby obtain our result from Theorem~\ref{thm:theorem1}. This is fewer characters than our first algorithm that require $\sum_{X_i\in \slp_q}|r_{X_i}|$ characters to be decompressed. Furthermore, it is exactly the same number of characters decompressed by the fastest known algorithm due to Goto et al.~\cite{goto12}.

\bibliographystyle{abbrv}
\bibliography{references}

\end{document}